\newtheorem{thm}{Theorem}
\numberwithin{thm}{section}
\newtheorem{cor}[thm]{Corollary}
\newtheorem{conj}[thm]{Conjecture}
\newtheorem{lem}[thm]{Lemma}
\renewcommand{\p@subsection}{}
\renewcommand{\p@subsubsection}{}
\newcommand{\ud}{\mathrm{d}}
\newcommand\mT{\mathcal{T}}
\newcommand\mR{\mathcal{R}}
\newcommand\mI{\mathcal{I}}
\newcommand\mU{\mathcal{U}}
\newcommand{\lr}[1]{\left( #1 \right)}
\newcommand{\mlr}[1]{\left[ #1 \right]}
\newcommand{\ii}{\mathrm{i}}
\newcommand{\PP}{\mathbb{P}}
\newcommand{\LL}{\mathcal{L}}
\newcommand{\norm}[1]{\left\lVert#1\right\rVert}
\newcommand{\abs}[1]{\left| #1 \right|}
\newcommand{\keto}[1]{\left| #1\right)}
\newcommand{\brao}[1]{\left( #1\right|}
\newcommand{\rarrow}{\quad \Rightarrow \quad}
\newcommand{\tU}{\widetilde{U}}
\newcommand{\tmU}{\widetilde{\mU}}
\newcommand{\comment}[1]{}
\newcommand{\tr}{\mathrm{tr}}
\newcommand\e{\mathrm{e}}
\begin{document}

\title{Frobenius light cone and the shift unitary}

\author{Chao Yin}
\email{chao.yin@colorado.edu}
\affiliation{Department of Physics and Center for Theory of Quantum Matter, University of Colorado, Boulder, CO 80309, USA}

\author{Andrew Lucas}
\email{andrew.j.lucas@colorado.edu}
\affiliation{Department of Physics and Center for Theory of Quantum Matter, University of Colorado, Boulder, CO 80309, USA}

\author{David T. Stephen}
\email{davidtstephen@gmail.com}
\affiliation{Department of Physics and Center for Theory of Quantum Matter, University of Colorado, Boulder, CO 80309, USA}
\affiliation{Department of Physics, California Institute of Technology, Pasadena, California 91125, USA}

\begin{abstract}
We bound the time necessary to implement the shift unitary on a one-dimensional ring, both using local Hamiltonians and those with power-law interactions.  This time is constrained by the Frobenius light cone; hence we prove that (for certain power law exponents) shift unitaries cannot be implemented in the same amount of time needed to prepare long-range Bell pairs.  We note an intriguing similarity between the proof of our results, and the hardness of preparing symmetry-protected topological states with symmetry-preserving Hamiltonians.
\end{abstract}

\date{\today}

\maketitle

\tableofcontents

\section{Introduction}
The past decade has seen an enormous advance in our understanding of the limitations of spatial locality in many-body physics \cite{Lieb1972,Hastings_koma,Nachtergaele_2006,Wang:2019jsj,Friedman:2022vqb}, including in systems with power-law interactions \cite{fossfeig,gong2017,Tran_2019_polyLC,else,strongly_gorshkov20,alpha_3_chenlucas,strictlylinear_KS,hierarchy20,Kuwahara_OTOC,Chen_1d21,tran2021} and bosons \cite{LRion,kuwahara2021liebrobinson,Yin:2021uio,Kuwahara:2022hlg,Faupin:2021nhk,Faupin_boson22,Lemm_boson23}: see \cite{ourreview} for a recent review.  These works often generalize the Lieb-Robinson Theorem \cite{Lieb1972}, which constrains commutators between space-time distant local operators: \begin{equation}
    \lVert [A_x(t),B_0]\rVert \le C(x,t). \label{eq:LRlikebound}
\end{equation}
By getting tight bounds on $C(x,t)$, we can show that e.g. the time it takes to perform state transfer between sites $x$ and 0 is at least as large as the smallest time $\tau$ for which $C(x,\tau)=2$ \cite{Bravyi2006}.  For many of the Lieb-Robinson bounds recently discovered, optimal protocols exist which demonstrate their tightness \cite{eldredge,Tran_GHZ21,yifan2021,Kuwahara:2022hlg}.

{ It is also known that Lieb-Robinson bounds can be lousy at bounding certain tasks: in other words, it might take much longer to achieve a desired goal than implied by (\ref{eq:LRlikebound}).} In theories with power-law interactions with a certain range of exponents, this was first emphasized in \cite{hierarchy20}, where it was noted that being able to implement a ``background-independent state transfer" that transfers \emph{Pauli operators}:
\begin{equation}
X_0(t) = X_r, \label{eq:frobeniusprotocol}
\end{equation}
cannot be done in the same amount of time as performing state transfer between sites 0 and $r$ { in one particular state: find unitary $U(t)$ such that \begin{equation}
    U(t)\left[|\psi\rangle_0 \otimes |\mathbf{0}\rangle_{\text{rest}}\right] = |\psi\rangle_r \otimes |\mathbf{0}\rangle_{\text{rest}}.\label{eq:xfer1state}
\end{equation}  Note that these two tasks are \emph{not} equivalent: fast state transfer protocols indeed rely on knowing that the state of other qubits is $|\mathbf{0}\rangle$, whereas any protocol obeying (\ref{eq:frobeniusprotocol}) does not care about the state of any particular qubit.  In this example above, in a one-dimensional chain it takes $t \gtrsim r/\log^2 r$ to achieve criterion (\ref{eq:frobeniusprotocol}), whereas (\ref{eq:xfer1state}) can be achieved in a time $t< r^\beta$ for any $\beta>0$. }  Since we have protocols that achieve time $t<r^\beta$ for task (\ref{eq:xfer1state}), and bounds that require $t>r/\log^2 r$ for task (\ref{eq:frobeniusprotocol}),  the authors of \cite{hierarchy20} said that there was a ``hierarchy of light cones", with differing ``light cones" (i.e. bounds on the time needed by an optimal protocol) existing for different tasks. 

\subsection{The shift unitary}

This paper is about another task which we will prove cannot be performed quickly perform in systems with power-law interactions {  than a naive application of (\ref{eq:LRlikebound}) would suggest}: implementing the shift unitary on a one-dimensional ring of $R$ sites.  We define the shift unitary $U_{\mathrm{sh}}$ as: \begin{equation}\label{eq:Ush_def}
    U_{\mathrm{sh}}^\dagger A_x U_{\mathrm{sh}} = A_{x+1}
\end{equation} for any single-qubit operator $A$.   Here and below, site indices $x$ are integers modulo $R$. Our goal is to constrain the time necessary to implement $U_{\mathrm{sh}}$ using a local Hamiltonian protocol.  Defining \begin{equation}\label{eq:U_def}
    U = \mathcal{T}\mathrm{e}^{-\mathrm{i}\int \mathrm{d}t H(t)},
\end{equation}
subject to locality and boundedness constraints on $H(t)$, we will see that it can take a parametrically long time in system size to obtain $U=U_{\mathrm{sh}}$. {   Clearly, (\ref{eq:LRlikebound}) does not give us any condition on the time it takes for $U^\dagger_{\mathrm{sh}}A_xU_{\mathrm{sh}}=A_{x+1}$, since this criterion can be achieved in finite time by using a Hamiltonian evolution that implements a SWAP gate on $x$ and $x+1$.  }

The shift unitary is an example of a quantum cellular automaton (QCA), which refers to any unitary operator with a strict light cone, meaning that $C(x,t)=0$ for large enough values of $x$ \cite{Farrelly2020}. In \cite{Gross2012}, the authors (GNVW) showed that all QCA can be written as a composition of a finite-depth quantum circuit and a translation operator. The net flow of information of generated by the translation is a discrete invariant captured by the GNVW index of a QCA. 
This index is formally defined in terms of operator algebras on an infinite lattice, but it can also be understood in finite systems using tensor network methods \cite{Cirac2017} or entropic measures \cite{Duschatko2018,Gong2021}. 
The index theory shows that any QCA with a non-zero index cannot be implemented as a finite-depth quantum circuit \cite{Gross2012}. Indeed, the simplest implementation of the shift unitary requires a circuit of SWAP gates whose depth grows linearly in system size. 

It is not \textit{a priori} clear whether the index theory of QCA carries over to the case where interactions are only approximately locality-preserving, corresponding to light cones with decaying tails. This case is important since it is true for generic time evolutions generated by quasi-local Hamiltonians. In \cite{QCA_LRB22}, the authors showed that the index theory is indeed well-defined even for approximately locality-preserving unitaries, such that QCA of non-zero index like $U_{\mathrm{sh}}$ cannot be implemented in finite time even with Hamiltonians having power-law decaying interactions (for large enough powers). However, the authors of \cite{QCA_LRB22} note that their technique is only applicable to strictly infinite-sized systems. One of the goals of this paper is to explain a few simple ways to prove the impossibility of easily implementing the shift on a finite ring. 


\subsection{Main result}
The set-up of this paper is as follows.  We consider a one-dimensional ring graph, where vertices are labeled by $\mathbb{Z}_N$.  We place a qubit on every site of the ring graph, such that the quantum mechanical Hilbert space $\mathcal{H} = (\mathbb{C}^2)^{\otimes N}$.   The distance between vertices $x,y\in\mathbb{Z}_N$ is \begin{equation}
    \mathsf{d}(x,y) = \min(|y-x|, N-|y-x|).
\end{equation}
For technical ease later, we focus on the case $N=4L$, but as our results are interesting in the large $N$ limit, this restriction could be easily relaxed.   The induced distance between two sets $S,S'$ is \begin{equation}
\mathsf{d}(S,S'):=\min_{x\in S, y\in S'}\mathsf{d}(x,y).
\end{equation}
The complement of a set $S$ is denoted by $S^{\rm c}$.  For operator $A \in \mathrm{End}(\mathcal{H})$, $\mathrm{supp}(A) \subseteq \mathbb{Z}_N$ denotes the sites on which it acts non-trivially.

The main result of this paper is the following theorem:

\begin{thm}\label{thm:powerlaw}
Let $H(t)$ be a 2-local Hamiltonian on the ring: 
\begin{equation}
    H(t) = \sum_{\lbrace x,y\rbrace \subset \mathbb{Z}_N} H_{xy}(t),
\end{equation} 
where $H_{xy}(t)$ acts non-trivially only on sites $x$ and $y$ (and identity elsewhere), and for some $0<K<\infty$: \begin{equation}
    \lVert H_{xy}(t)\rVert \le K \mathsf{d}(x,y)^{-\alpha}. \label{eq:powerlawdecay}
\end{equation}
We say that this Hamiltonian has power-law decaying interactions of exponent $\alpha$.\footnote{A strictly local Hamiltonian can therefore be thought of as power-law with arbitrarily large $\alpha$.} Then the two unitaries in (\ref{eq:Ush_def}) and (\ref{eq:U_def}) are far apart: \begin{equation}\label{eq:U-shift>}
    \norm{U-U_{\mathrm{sh}}} \ge 1/8,
\end{equation}so long as 
\begin{equation} 
        T \le \left\lbrace \begin{array}{ll} C^\prime L-C &\ \alpha\ge 4 \\ C^\prime L^{(\alpha-1)/3} - C &\ 3<\alpha<4 \\  C^\prime L^{(\alpha-2)(\alpha-1)/(2\alpha-3)-\epsilon}  &\ 2+1/\sqrt{2}<\alpha\le 3 \\ C^\prime L^{1/2-\epsilon} - C &\ 2\le \alpha<2+1/\sqrt{2} \\ C^\prime L^{(\alpha-1)/2} - C &\ 1<\alpha<2 \end{array}\right.. \label{eq:zoo}
    \end{equation}
 for arbitrarily small $\epsilon>0$, and for constants $0<C,C^\prime<\infty$ which do not depend on $L$, but may depend on $\epsilon$.
\end{thm}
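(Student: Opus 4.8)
The plan is to argue by contradiction: assume $\norm{U-U_{\mathrm{sh}}}<1/8$ while $T$ is below the relevant threshold in \eqref{eq:zoo}, and derive that this is impossible. The obstruction to exploit is that $U_{\mathrm{sh}}$ is a quantum cellular automaton of nontrivial GNVW index \cite{Gross2012}, whereas $U$ from \eqref{eq:U_def} is connected to $\mathbbm{1}$ through the family $U(s)=\mathcal{T}\e^{-\ii\int_0^s\ud t\,H(t)}$, each member of which is only \emph{approximately} locality preserving, with a light-cone radius $\ell=\ell(T)$ that we will show is much smaller than $L$. Since the index is a discrete, deformation-invariant quantity on the class of approximately-locality-preserving unitaries with light cone $\ll L$, and is stable under $O(1/8)$ perturbations in operator norm, $U$ cannot simultaneously inherit the trivial index of $\mathbbm{1}$ and be $1/8$-close to the nontrivial $U_{\mathrm{sh}}$. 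The work will be to make this quantitative on the finite ring $\mathbb{Z}_{4L}$.

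Concretely, I would fix a partition of $\mathbb{Z}_{4L}$ into four consecutive arcs $I_1,I_2,I_3,I_4$ of length $L$ and replace the abstract index by an explicit ``string-order-parameter'' functional $\mathcal{F}(V)$ built from $V$-conjugated single-site Paulis supported on these arcs --- equivalently, a correlation function in the Choi (operator) state of $V$, as in the entropic/tensor-network formulations of the index \cite{Cirac2017,Duschatko2018,Gong2021}. The target properties are: (i) $\mathcal{F}(\mathbbm{1})$ and $\mathcal{F}(U_{\mathrm{sh}})$ take two separated values (the index being $0$ versus $\log 2$, in a convenient normalization), checked by a direct computation on the shift; (ii) $\mathcal{F}$ is Lipschitz in operator norm, so $\norm{U-U_{\mathrm{sh}}}<1/8$ forces $\mathcal{F}(U)$ near $\mathcal{F}(U_{\mathrm{sh}})$; and (iii) $\mathcal{F}(U)$ is in fact close to $\mathcal{F}(\mathbbm{1})$ whenever the light cone of $H$ is small compared with $L$, because then the arc structure ``factorizes'' across the gaps and $\mathcal{F}$ collapses to its trivial value. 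Properties (i)--(ii) are bookkeeping; the content is (iii).

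For (iii) I would feed in the sharpest available one-dimensional light-cone estimates, regime by regime. For strictly local $H$ this is the exponential-tail Lieb--Robinson bound \cite{Lieb1972}: $U^\dagger A_x U$ is supported within distance $\sim vT$ of $x$ up to error $\e^{-cL/(vT)}$, which collapses $\mathcal{F}$ negligibly once $vT\ll L$, giving the $\alpha\ge4$ line of \eqref{eq:zoo}. For power-law $H$ I would invoke the bounds of \cite{Tran_2019_polyLC,alpha_3_chenlucas,strictlylinear_KS,Chen_1d21}: in each $\alpha$-window, a bound of the shape $\norm{[U^\dagger A_x U,B_y]}\lesssim$ (a function of $T$ and $\mathsf{d}(x,y)$), and --- crucially --- its Frobenius-normalized analogue, since $\mathcal{F}$ is an expectation value rather than an operator norm; the Frobenius light cone is what produces the square-root exponents such as $L^{(\alpha-1)/2}$ and $L^{1/2-\epsilon}$ in place of the operator-norm ones. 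Summing the tail contributions over the $O(L^2)$ pairs of sites straddling the arc boundaries and requiring the total be $<1/8$ gives, after optimizing the one free parameter --- the thickness of the boundary collar one allows operators to leak into --- the stated lower bound on $T$ in each regime; the $\epsilon$'s in the two middle regimes reflect the $t^{\theta+o(1)}$ form of the best bounds there.

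The hard part will be step (iii): showing that an approximately-locality-preserving unitary genuinely has (almost) trivial index \emph{on a finite ring}, with quantitative error, given that the existing index theory \cite{QCA_LRB22} is formulated in infinite volume and the power-law tails never strictly vanish. This is where the analogy with the obstruction to preparing one-dimensional symmetry-protected topological states by symmetry-preserving local Hamiltonians is used: the same telescoping estimate that shows a symmetric quasi-local evolution cannot pump a projective edge charge shows here that a quasi-local evolution cannot build up the unit of information flow around the ring. It is also where the five regimes must be matched at their endpoints, so that \eqref{eq:zoo} is continuous.
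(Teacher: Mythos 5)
Your overall plan — use light-cone tails to argue that $U$ is approximately locality-preserving on scale $\ell(T)\ll L$, and then exploit a locality obstruction to distinguish it from $U_{\mathrm{sh}}$ — shares the paper's spirit, and your regime-by-regime use of operator-norm versus Frobenius-norm bounds is the right instinct for why the exponents change. But your central device, a quantitative index functional $\mathcal{F}$ with properties (i)--(iii), is not what the paper does, and it has a genuine gap precisely where you flag the difficulty. Step (iii) — ``an approximately-locality-preserving unitary on a finite ring has almost-trivial index, quantitatively'' — is exactly the statement the paper declines to prove, citing that the existing index theory of \cite{QCA_LRB22} works only in infinite volume. You gesture at a ``telescoping estimate'' by analogy with SPT state-preparation, but you never supply it, so the argument is not closed. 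Step (ii) is also shakier than it looks: the entropic/tensor-network formulations of the index \cite{Cirac2017,Duschatko2018,Gong2021} involve mutual information of a Choi state or ratios of bond dimensions, and continuity estimates for these (Fannes-type bounds) carry $\log\dim$ factors, so Lipschitz-in-operator-norm with an $L$-independent constant is not automatic and would need to be proved for whatever $\mathcal{F}$ you pick.

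The paper avoids both issues by replacing the abstract index with a concrete state-preparation obstruction. It circuitizes $U$ (HHKL-style) into $\tU=(U_I\otimes U_0)(U_{\rm L}\otimes U_{\rm R})$, with the Lieb-Robinson or Frobenius bound controlling $\norm{U-\tU}$; then it picks the specific initial state $\rho_i$ which is a product of $\ket{0}$'s on half the ring and maximally mixed on the other half, and proves (Lemma~\ref{lem:shift_rho}) by a spectral/rank count that the circuit geometry of $\tU$ cannot move the purity boundary by one site. The $\alpha$-dependent exponents in \eqref{eq:zoo} then come from optimizing the collar width $\ell$ used when dropping the far couplings $H_{\rm far}$ and bounding the residual cut couplings via $g_\alpha$ or $f_\alpha$ — very similar in spirit to your ``sum the tails over $O(L^2)$ straddling pairs and optimize the collar,'' and that part of your plan would survive. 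For $\alpha$ below $2+1/\sqrt2$ the paper additionally needs the observation that the hardness holds for \emph{every} computational-basis assignment $z$ on the pure half (and hence for the Haar-averaged / Frobenius notion of distance), not just $z=0\cdots0$; your proposal does not have an analogue of this step, which is what lets the paper use the slower Frobenius light cone at small $\alpha$ without giving up the $1/8$ separation. In short: the strategy is not wrong in outline, but the single missing lemma you would need (your step (iii)) is the entire content, and the paper's proof is built to bypass exactly that lemma rather than prove it.
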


The proof of Theorem \ref{thm:powerlaw} is provided in the proofs of Theorem \ref{thm:T>power} and Theorem \ref{thm:Frob_shift} in the body of the paper.  The proof of this main result has a bit of a distinct flavor from a standard Lieb-Robinson bound, where one worries about the time it takes to saturate (\ref{eq:LRlikebound}).  In a subtle contrast, to bound the time needed to implement $U_{\mathrm{sh}}$, it makes more sense to find the state that is \emph{hard} to prepare via shift.   Locality bounds then are used to constrain the time needed to prepare the shift on that particular state.  The zoo of different bounds in (\ref{eq:zoo}) is a consequence of the fact that we have to use somewhat different methods for differing $\alpha$: for $\alpha>2+1/\sqrt{2}$ we use Lieb-Robinson bounds, while for smaller $\alpha$ we use Frobenius bounds.  We conjecture, however, that the Frobenius light cone \cite{Chen_1d21} always bounds the hardness of implementing shift, suggesting that:
\begin{conj}
    (\ref{eq:zoo}) can be replaced by 
    \begin{equation}
        T \le \left\lbrace \begin{array}{ll} C^\prime L-C &\ \alpha\ge 2 \\ C^\prime L^{\alpha-1} - C &\ 1<\alpha<2 \end{array}\right..
    \end{equation}
\end{conj}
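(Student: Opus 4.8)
The plan is to identify, for each $\alpha$, the single state whose preparation-by-shift is genuinely hardest, and then apply the sharpest available locality bound --- the Frobenius light cone of \cite{Chen_1d21} --- rather than the cruder Lieb--Robinson bound used in the $\alpha>2+1/\sqrt2$ regime of Theorem~\ref{thm:powerlaw}. The key structural observation (inherited from the proof strategy behind Theorem~\ref{thm:Frob_shift}) is that $U_{\mathrm{sh}}$ must shift \emph{operators}, not states, so the relevant obstruction is an operator-norm identity: there is a Pauli string supported on an arc of length $\sim L$ that $U_{\mathrm{sh}}$ maps to a disjointly-supported string, and any $U$ with $\norm{U-U_{\mathrm{sh}}}<1/8$ must approximately reproduce this. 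Concretely, choose $A$ a product of single-site Paulis on the arc $[1,L]$ and $B$ the same pattern on $[2,L+1]$; then $U_{\mathrm{sh}}^\dagger A U_{\mathrm{sh}}=B$ while $[A,B]\neq0$ only through the two endpoints. The first step is to convert ``$U\approx U_{\mathrm{sh}}$'' into a statement that $U^\dagger A U$ must have large overlap (in normalized Hilbert--Schmidt / Frobenius inner product, $\la X,Y\ra = 2^{-N}\tr(X^\dagger Y)$) with $B$, hence cannot be well-approximated by any operator supported within Frobenius-distance-$o(L)$ of $\mathrm{supp}(A)$.

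The second step is the Frobenius-norm Lieb--Robinson input. One runs the Frobenius light cone of \cite{Chen_1d21}, which for $1<\alpha<2$ gives a light-cone shape $|x|\lesssim t^{1/(\alpha-1)}$ (equivalently, information needs time $\gtrsim |x|^{\alpha-1}$ to traverse distance $|x|$ \emph{in Frobenius norm}), and for $\alpha\ge 2$ gives a linear light cone $|x|\lesssim t$. Applied with $|x|\sim L$, this yields exactly the conjectured thresholds $T\gtrsim L^{\alpha-1}$ and $T\gtrsim L$. The technical content is to show that the Frobenius-norm leakage of $U^\dagger A U$ outside a growing arc is small: one decomposes $A=\sum_j A_j$ into its single-site factors (or uses the fact that Frobenius norm is sub-multiplicative under tensor products of disjointly-supported pieces only up to the number of factors), tracks each factor's Frobenius-norm spreading, and controls the accumulated error by a union bound over the $\sim L$ factors. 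Because each factor contributes an $O(1)$-norm error once its light cone reaches distance $L$, one needs the per-factor Frobenius leakage to be $o(1/L)$ --- which is exactly what the sharpened Frobenius bounds of \cite{Chen_1d21} deliver once $T$ is below the stated threshold (absorbing the factor of $L$ into the $\poly(L)$ prefactor costs only a subleading $\log L$ or $L^{\epsilon}$, which is why even the conjectured form is stated with a clean power and no $\epsilon$).

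The third step handles the regime $2\le\alpha\le 2+1/\sqrt2$ and $2+1/\sqrt2<\alpha\le 3$, where Theorem~\ref{thm:powerlaw} currently gives the weak bounds $L^{1/2-\epsilon}$ and $L^{(\alpha-2)(\alpha-1)/(2\alpha-3)-\epsilon}$. Here the conjecture asserts a \emph{linear} light cone $T\gtrsim L$. The route is to use that for $\alpha\ge 2$ the Frobenius light cone is believed (and in several regimes proven) to be strictly linear --- this is the ``Frobenius linear light cone for $\alpha\ge 2$'' statement, a refinement over the Lieb--Robinson linear light cone which only holds for $\alpha>2$ (with a logarithmic correction at $\alpha=2$ that is harmless here, being swallowed by the $-C$ and $C'$ constants). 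Plugging a linear Frobenius shape into the operator-spreading estimate of step two immediately gives $T\ge C'L-C$.

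\textbf{Main obstacle.} The crux is step two for $1<\alpha<2$: the existing Frobenius bounds of \cite{Chen_1d21} are typically phrased for the spreading of a \emph{single} local operator, $\norm{[A_x(t),B_0]}_{\mathrm{F}}$, whereas here $A$ is extensive (length $\sim L$). Naively summing $\sim L$ single-site contributions loses a factor of $L$, which would degrade $L^{\alpha-1}$ to $L^{\alpha-1}/\poly$ --- acceptable only if the loss is subpolynomial. The real work is therefore to prove an \emph{extensive-operator} Frobenius bound: that the Frobenius-norm leakage of $U^\dagger A U$ past the arc $[1-\ell,L+\ell]$ decays fast enough in $\ell$, uniformly in the length of $A$, perhaps by exploiting that $A$ itself is locality-preserving-like (it is a tensor product) and that Frobenius norm of a commutator with such $A$ factorizes better than worst-case. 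Getting this factor-of-$L$ loss down to $L^{o(1)}$ --- or showing it is genuinely absent --- is what separates the conjecture from the theorem, and it is where a new idea (beyond the methods in the body) is needed; this is also why we state it as a conjecture rather than a theorem.
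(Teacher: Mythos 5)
This statement is a \emph{conjecture} in the paper --- there is no proof of it in the text to compare against --- and your proposal does not close it either: by your own admission the ``extensive-operator Frobenius bound'' is missing, so at best this is a strategy sketch. More importantly, the sketch misidentifies both the mechanism of hardness and the actual bottleneck. Your step one is a non sequitur: $U_{\mathrm{sh}}$ shifts every operator by exactly one site, so $B=U_{\mathrm{sh}}^\dagger A U_{\mathrm{sh}}$ is supported within distance one of $\mathrm{supp}(A)$, and nothing forces $U^\dagger A U$ to have weight far from $\mathrm{supp}(A)$; no single Pauli string has to be transported a distance $\sim L$, so no light-cone bound (Lieb--Robinson or Frobenius) applied to the spreading of $A$ can, by itself, yield a lower bound growing with $L$. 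The hardness of the shift in the paper comes instead from its nonzero index: one circuitizes $U$ into half-ring unitaries plus boundary patches (Fig.~\ref{fig:shift}) and shows via the rank/spectrum argument of Lemma \ref{lem:shift_rho} (or its Super2 version with $\mR_i$) that such a circuit cannot move the boundary of an identity block by one site. Your proposal never engages with this step, and without it the claimed contradiction for $T$ below threshold does not materialize.

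Second, the quantitative claim that ``applying the Frobenius light cone with $|x|\sim L$ yields exactly the conjectured thresholds'' is false under the available estimates. The Frobenius input \eqref{eq:FrobLC} gives $f_\alpha(t,r)\sim t\,r^{1-\alpha}$ for $1<\alpha<2$ and $\sim t\log r/r$ for $\alpha\ge 2$, but the circuitization error accumulates as $T\,f_\alpha(T,0.9L)$ (the Duhamel time integral over the cut couplings, cf.\ \eqref{eq:Ufar-U0<T}), and demanding this be $O(1)$ gives precisely $T\lesssim L^{(\alpha-1)/2}$ and $T\lesssim\sqrt{L/\log L}$ --- i.e.\ Theorem \ref{thm:Frob_shift}, not the conjecture. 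The genuine open problem is removing this extra factor of $T$, which the paper itself flags as the reason its bounds are likely suboptimal. Your ``main obstacle'' (a union bound over $\sim L$ single-site factors of an extensive string) is not where the loss occurs: the sum over cut terms already converges for $\alpha>1$ in Frobenius norm with no factor-of-$L$ penalty. So even granting your missing lemma, the argument as outlined would reproduce the existing square-root-type bounds at best, not the conjectured $T\lesssim L$ (for $\alpha\ge2$) or $T\lesssim L^{\alpha-1}$ (for $1<\alpha<2$).
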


Lastly, we emphasize an intriguing analogy that we found somewhat non-trivial.   One way to understand that the shift operator is hard to implement is that it must move a generic operator on sites $1,2,\ldots, N/2$ to sites $2,3,\ldots,N/2+1$.  This average operator motion can occur only near sites 1 and $N/2$.  As a consequence, the shift operator is hard to implement because its implementation requires ``signaling" between the two boundaries of this operator.   An analogous idea underlies the proof that symmetry-protected topological (SPT) states are hard to prepare using symmetry-preserving Hamiltonian protocols \cite{SPT_lineardepth15}.  The ``Super2" formalism that we describe (which makes the above analogy more precise) can thus help to give a more physics-based intuition into the hardness of quantum computational tasks.

\section{Lieb-Robinson perspective}
We now turn to our proof of the main result of this paper, which will proceed in bits and pieces.  In this section, we first prove bounds on the hardness of implementing the shift operator by using Lieb-Robinson bounds.

\subsection{Local interactions}

For pedagogical purposes, we begin our discussion by relaxing the power-law decay assumption of (\ref{eq:powerlawdecay}).  Suppose that the Hamiltonian \begin{equation}
    H(t) = \sum_{x\in\mathbb{Z}_N} H_{x,x+1}(t) \label{eq:Hstrictlocal}
\end{equation}
with 
\begin{equation}\label{eq:Hxx<1}
    \max_x \norm{H_{x,x+1}(t)} \le 1.
\end{equation}
The discussion easily generalizes to any finite-range interaction.
For such systems, the following Lieb-Robinson bound holds: 
\begin{thm} \label{thm:LR} \cite{Lieb1972,ourreview} There exist constants $c_{\rm LR},\mu,v>0$, such that for any pairs of local operators $A,B$ with distance $r=\mathsf{d}(\mathrm{supp}A,\mathrm{supp}B)$, if $A(t)$ denotes Heisenberg evolution under $H(t)$ given in (\ref{eq:Hstrictlocal}),
    \begin{equation}\label{eq:LRB}
    \norm{[A(t), B]} \le c_{\rm LR} \norm{A}\norm{B} \e^{\mu(vt-r)}.
\end{equation}
\end{thm}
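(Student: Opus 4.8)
The plan is to recall the standard proof of Theorem~\ref{thm:LR}, the Lieb--Robinson bound for finite-range time-dependent Hamiltonians \cite{Lieb1972,ourreview}. First I would put the time evolution onto one operator in a form that behaves well for time-dependent $H$: with $U(t):=\mathcal{T}\e^{-\ii\int_0^t H(s)\,\ud s}$ and $\beta(t):=U(t)BU(t)^\dagger$, one has $\norm{[A(t),B]}=\norm{[A,\beta(t)]}$ and $\partial_t\beta(t)=-\ii[H(t),\beta(t)]$. Decompose $H(t)=\sum_x h_x(t)$ with $h_x(t):=H_{x,x+1}(t)$ supported on the edge $\{x,x+1\}$ and $\norm{h_x(t)}\le1$. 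Fixing a region $Y$ disjoint from $\mathrm{supp}(B)$ and an operator $A'$ with $\mathrm{supp}(A')\subseteq Y$, $\norm{A'}\le1$, I would differentiate $[A',\beta(t)]$ and use the Jacobi identity to split $\partial_t[A',\beta(t)]$ into the norm-preserving conjugation $-\ii[H(t),[A',\beta(t)]]$ and a source $\ii\sum_{x:\{x,x+1\}\cap Y\neq\emptyset}[[h_x(t),A'],\beta(t)]$, in which only edges meeting $Y$ contribute. Since $[A',\beta(0)]=[A',B]=0$, integrating and taking the supremum over such $A'$ gives a reproducing inequality: with $C_B(Y,t):=\sup\big\{\norm{[A'(t),B]}:\mathrm{supp}(A')\subseteq Y,\ \norm{A'}\le1\big\}$,
\begin{equation}
C_B(Y,t)\ \le\ \sum_{x:\,\{x,x+1\}\cap Y\neq\emptyset}\ \int_0^{t}2\norm{h_x(s)}\;C_B\big(Y\cup\{x,x+1\},\,s\big)\,\ud s
\end{equation}
whenever $Y\cap\mathrm{supp}(B)=\emptyset$, along with the a priori bound $C_B(Y,t)\le 2\norm{B}$ that always holds.

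Next I would iterate this inequality, each step adjoining to $Y$ one edge that meets it, stopping a chain the moment the enlarged region first touches $\mathrm{supp}(B)$ (where the bound $2\norm{B}$ is used); the tail left after $N$ iterations is negligible for large $N$. On the ring an edge advances the region towards $\mathrm{supp}(B)$ by at most one site, so a chain first reaching $\mathrm{supp}(B)$ has length $\ge r:=\mathsf{d}(\mathrm{supp}A,\mathrm{supp}B)$; summing over all chains yields
\begin{equation}
\norm{[A(t),B]}\ \le\ c_{\rm LR}\,\norm{A}\norm{B}\sum_{n\ge r}\frac{(c\,t)^{\,n}}{n!}
\end{equation}
for suitable constants $c,c_{\rm LR}$. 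Finally, using $\tfrac{r!}{(r+k)!}\le\tfrac1{k!}$ and $\tfrac{y^r}{r!}\le\e^{y}$ ($y\ge0$),
\begin{equation}
\sum_{n\ge r}\frac{x^{\,n}}{n!}\ =\ \frac{x^r}{r!}\sum_{k\ge0}\frac{r!}{(r+k)!}\,x^k\ \le\ \frac{x^r}{r!}\,\e^{x}\ =\ \e^{-\mu r}\,\frac{(\e^{\mu}x)^{r}}{r!}\,\e^{x}\ \le\ \e^{-\mu r}\,\e^{(1+\e^{\mu})x},
\end{equation}
so with $x=ct$ and $v:=(1+\e^{\mu})c/\mu$ (for any fixed $\mu>0$) the series is $\le\e^{\mu(vt-r)}$, which is exactly \eqref{eq:LRB}.

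The genuinely delicate step — and the reason this theorem is quoted rather than reproved here — is getting to the clean form $c_{\rm LR}\norm{A}\norm{B}\sum_{n\ge r}(ct)^n/n!$ with $c,c_{\rm LR},v$ that do \emph{not} depend on the operators: crudely counting overlapping-edge chains introduces factors growing with $|\mathrm{supp}(A')|$ as $Y$ expands, so one should instead keep the edge sum and invoke a summability/convolution bound on the interaction weights (immediate here, since every site lies in only two edges and $\norm{h_x}\le1$), in the standard manner of \cite{ourreview}. The residual $O(1)$ combinatorial prefactor is harmless for the bounded-support (``local'') operators $A,B$ in the statement and is absorbed into $c_{\rm LR}$; that the lattice is a ring rather than a line costs at worst an extra factor $2$, for the two ways around.
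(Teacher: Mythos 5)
Theorem~\ref{thm:LR} is quoted from \cite{Lieb1972,ourreview} without proof in the paper, so there is no in-text argument to compare against; you are reconstructing the standard proof. Your reconstruction is essentially sound, and you have correctly isolated the one genuine subtlety. As you flag, the reproducing inequality you write — summing over all edges with $\{x,x+1\}\cap Y\neq\emptyset$ — is a correct inequality but not one you can iterate as is: the $\sim|Y|$ interior edges (both endpoints already in $Y$) give same-level terms $C_B(Y,s)$, so a naive iteration picks up a factor $\sim|Y|$ at each step rather than the bounded per-step weight needed for the series $\sum_n (ct)^n/n!$ to emerge. The standard repair, which your closing paragraph correctly gestures at, is to fold the part of $H(t)$ supported inside $Y$ into the norm-preserving conjugation as well (e.g.\ by comparing $A'(t)$ to the dynamics generated by $H_Y(t)=\sum_{\{x,x+1\}\subseteq Y}h_x(t)$ and writing a Duhamel formula in which only the boundary couplings of $Y$ enter the source), so that at each step only the at most two edges crossing $\partial Y$ contribute; equivalently, one keeps the edge sum as a convolution of interaction weights and uses their summability, in the spirit of \cite{ourreview}. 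With that fix the number of contributing chains at step $n$ is $O(2^n)$ on a line (choice of direction), times a factor 2 for the two ways around the ring, both absorbed into $c$. The remaining steps are correct as written: the factor $2\norm{h_x}$ from $\norm{[h_x,A']}\le 2\norm{h_x}\norm{A'}$, the observation that a chain first touching $\mathrm{supp}(B)$ has length $\ge r$, and the elementary estimate $\sum_{n\ge r} x^n/n!\le \e^{-\mu r}\e^{(1+\e^\mu)x}$ yielding the Lieb--Robinson form with $v=(1+\e^\mu)c/\mu$.
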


Our goal is then to prove: 
\begin{thm}\label{thm:shift}
Suppose \eqref{eq:LRB} holds. 
If the evolution time $T$ satisfies \begin{equation}\label{eq:t>N}
    vT\le L - C,
\end{equation}
for some constant $C$ (see \eqref{eq:Ce=} below), then (\ref{eq:U-shift>}) holds. 
\end{thm}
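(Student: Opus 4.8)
The plan is to translate the hypothesis ``$U$ is close to $U_{\mathrm{sh}}$'' into an impossible statement about how $U$ conjugates the operator algebra of one half of the ring, and to obtain the contradiction from the Lieb--Robinson bound \eqref{eq:LRB}. What plays the role of the ``hard-to-prepare'' object here is not a single state on the ring but a hard \emph{algebra motion}: $U_{\mathrm{sh}}$ must translate the two antipodal boundaries of the left half of the ring simultaneously and coherently, realizing a net chiral flow of operator content --- the nonzero Gross--Nesme--Vogts--Werner (GNVW) index of the shift \cite{Gross2012} --- and a unitary of light-cone speed $v$ cannot manufacture that flow until the light cones emanating from the two boundaries have had time to wrap around the ring and meet, which costs time $\gtrsim L/v$.

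Concretely, suppose \eqref{eq:U-shift>} fails, so $\norm{U-U_{\mathrm{sh}}}<1/8$. Write $\mathsf{L}=\{1,\dots,2L\}$ and let $\mA_{\mathsf{L}}$ be the algebra of operators supported on $\mathsf{L}$. Equation \eqref{eq:Ush_def} gives $U_{\mathrm{sh}}^\dagger \mA_{\mathsf{L}} U_{\mathrm{sh}}=\mA_{\{2,\dots,2L+1\}}$; e.g.\ $U_{\mathrm{sh}}^\dagger(\prod_{x\in\mathsf{L}}Z_x)U_{\mathrm{sh}}=Z_1 Z_{2L+1}\prod_{x\in\mathsf{L}}Z_x$ and likewise for $X$-strings, so the left-half algebra is rotated onto itself with site $1$ dropped and site $2L+1$ adjoined. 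Since $\norm{U-U_{\mathrm{sh}}}<1/8$ implies $\norm{U^\dagger A U - U_{\mathrm{sh}}^\dagger A U_{\mathrm{sh}}}\le \tfrac14\norm{A}$ for all $A$, the unitary $U$ conjugates $\mA_{\mathsf{L}}$ to within $\tfrac14$ of $\mA_{\{2,\dots,2L+1\}}$. Now apply Theorem~\ref{thm:LR}: for $A\in\mA_{\mathsf{L}}$ the operator $U^\dagger A U$ is, up to an error $\le c_{\rm LR}\,\mathrm{e}^{\mu(vT-r)}$ at distance $r$, supported on the fattened half $\{x:\mathsf{d}(x,\mathsf{L})\le vT\}$. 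When $vT\le L-C$ this set leaves an untouched buffer arc of length $\ge 2(L-vT)\ge 2C$ inside the right half, so $U^\dagger\mA_{\mathsf{L}}U$ can deviate from $\mA_{\mathsf{L}}$ only inside the two \emph{disjoint} radius-$vT$ neighborhoods of the cuts (between sites $N,1$ and between sites $2L,2L+1$), which lie at distance $\gtrsim 2(L-vT)$; the same bound shows that these two portions of the motion are carried to operators commuting up to an exponentially small error.

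The contradiction then comes from a \emph{finite-ring} GNVW index extracted from the triple ``$\mA_{\mathsf{L}}$, its $U$-conjugate, and the buffer arc'': a discrete-valued quantity that is well defined precisely because $vT<L$ keeps the buffer nonempty, is constant on operator-norm balls of radius $\tfrac14$ (so by the previous paragraph it equals $\mathrm{ind}(U_{\mathrm{sh}})\neq0$), vanishes at $T=0$, and is unchanged along $U_t=\mathcal{T}\mathrm{e}^{-\mathrm{i}\int_0^t H}$ for $vt\le L-C$ --- this last point because, by the Lieb--Robinson estimate above, the part of $U_t$ that can move algebra content across one cut and the part that moves it across the other factor through separately supported, approximately commuting near-unitaries and therefore cannot build up a net circulation. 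Hence $\mathrm{ind}(U)=0$, contradicting $\mathrm{ind}(U)\neq0$, and we conclude $vT>L-C$; all Lieb--Robinson tails are absorbed into the explicit constant $C$ of \eqref{eq:Ce=}. I expect the main obstacle to be exactly this last ingredient: giving a self-contained, quantitative GNVW index on a \emph{finite} ring --- defined directly from approximately-supported algebras rather than from infinite-volume limits as in \cite{Gross2012,QCA_LRB22}, locally computable near a single cut, and provably stable under the short-time flow up to the slack $C$ --- whereas the localization and continuity steps are routine consequences of \eqref{eq:LRB}.
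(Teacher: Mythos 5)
Your argument asks for a discrete-valued, norm-continuous, homotopy-invariant ``finite-ring GNVW index'' to be handed to you, but you never construct it, and constructing it is not routine: it is precisely the missing ingredient that the paper identifies at the outset (\cite{QCA_LRB22} define such an index only for strictly infinite lattices, and the paper's stated purpose is to find an argument that works on a finite ring without it). The four properties you assert --- discreteness, constancy on $\tfrac14$-balls, vanishing at $T=0$, invariance under $U_t$ for $vt\le L-C$ --- are each nontrivial and none is proved; the phrase ``cannot build up a net circulation'' describes the conclusion you want rather than a step you have justified. In particular, your claim that the two cut-localized ``near-unitaries'' approximately commute does not by itself prevent an index from changing: one must say what the index is before one can show it is rigid, and on a finite ring there is no $C^*$-algebraic infinite-volume structure to appeal to.

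The paper sidesteps all of this with a concrete, elementary argument. It uses the Lieb--Robinson bound (Lemma~\ref{lem:HHKL}, an HHKL-style circuit approximation) to replace $U$ by $\tU=(U_I\otimes U_0)(U_{\rm L}\otimes U_{\rm R})$ within operator norm $1/8$ --- this is the rigorous version of your ``$U^\dagger\mA_{\mathsf L}U$ deviates only near the cuts'' --- and then, instead of an abstract index, it exhibits a single state $\rho_i$ in \eqref{eq:rhoi} for which a rank/operator-norm argument (Lemma~\ref{lem:shift_rho}) shows $\norm{\tU\rho_i\tU^\dagger-U_{\rm sh}\rho_i U_{\rm sh}^\dagger}_1\ge 1/2$: the left-half marginal of $\rho_i$ has exactly $2^L$ equal eigenvalues, but the target marginal, after the geometry-constrained $U_0$, has operator norm at most $2^{-L-1}$, which forces a trace-norm gap. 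This rank counting plays the role your index was meant to play, but it is explicit and finite-dimensional from the start. So the Lieb--Robinson localization step in your sketch matches the paper, but the heart of your proof --- the index and its stability --- is absent, and you would have to supply something comparable to Lemma~\ref{lem:shift_rho} to close it.
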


\begin{figure}
    \centering
    \includegraphics[width=.6\textwidth]{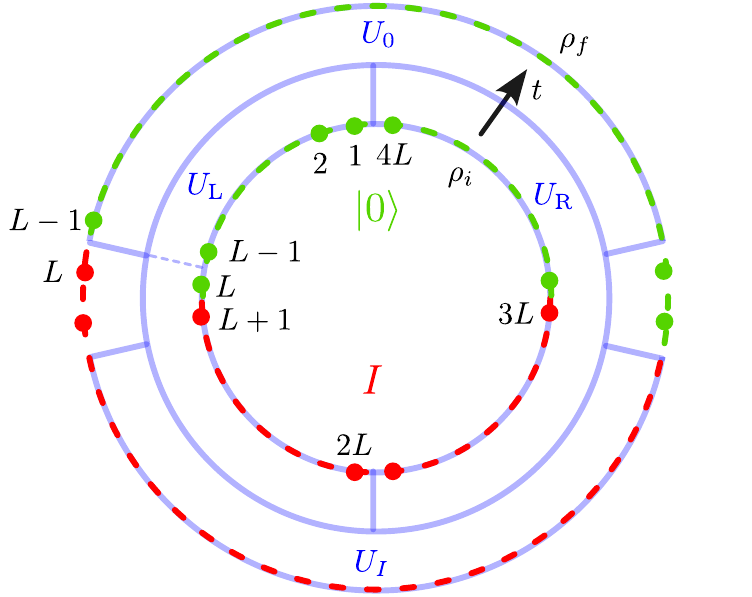}
    \caption{Sketch of the quantum circuit $\tU = \lr{ U_I \otimes U_0 }\cdot \lr{U_{\rm L}\otimes U_{\rm R} }$ that cannot achieve the shift operator. In the initial state \eqref{eq:rhoi}, the top half of the figure is $|0\rangle\langle 0|$ while the bottom half is $I$ on each site.}
    \label{fig:shift}
\end{figure}

\begin{proof}

To show \eqref{eq:U-shift>}, it suffices to find an approximation $\tU$ of $U$ such that \begin{subequations}\label{eq:U-tU-S}
    \begin{align}
        \norm{U-\tU} &\le \frac{1}{8}, \label{eq:U-tU} \\
        \norm{\tU-U_{\rm sh}} &\ge \frac{1}{4}, \label{eq:tU-S>}
    \end{align}
\end{subequations}
because from the triangle inequality, \begin{equation}\label{eq:U-S>tU}
    \norm{U-U_{\rm sh}}\ge \norm{\tU-U_{\rm sh}} - \norm{U-\tU} \ge \frac{1}{4}-\frac{1}{8}=\frac{1}{8}.
\end{equation}
We will choose $\tU$ as a quantum circuit (of very large gate size!) that approximations the Hamiltonian evolution, using ideas from \cite{Osborne_1d06,HHKL}. To this end, we first invoke the following Lemma to cut the coupling between site $0$ and $1$:

\begin{lem}\label{lem:HHKL}
If \eqref{eq:t>N} holds for constant \begin{equation}\label{eq:Ce=}
    C = 1+ \log \frac{16 c_{\rm LR}}{\mu v},
\end{equation}
there exists a unitary $U_0$ acting on region $\{2-L,3-L,\cdots,L-1\}$, such that \begin{equation}\label{eq:U=Uop}
    \norm{U-U_0 \cdot U_{\rm op} } \le \frac{1}{16},
\end{equation}
where \begin{equation}
    U_{\rm op} = \mathcal{T} \e^{-\ii \int^T_0 \ud t \mlr{H(t)-H_{01}(t)}},
\end{equation}
is generated by $H$ with open boundary. 
\end{lem}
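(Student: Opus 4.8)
The plan is to run a Lieb--Robinson ``bond-cutting'' estimate of the type in \cite{Osborne_1d06,HHKL}. Write $U(t)=\mT\e^{-\ii\int_0^t \ud s\, H(s)}$ and $U_{\rm op}(t)=\mT\e^{-\ii\int_0^t\ud s\,[H(s)-H_{01}(s)]}$ for the partial evolutions, so $U=U(T)$ and $U_{\rm op}=U_{\rm op}(T)$. By Duhamel (interaction picture) one has $U(T)=U_{\rm op}(T)\,W(T)$ with $W(T)=\mT\e^{-\ii\int_0^T\ud t\,\widetilde H_{01}(t)}$ and $\widetilde H_{01}(t)=U_{\rm op}(t)^\dagger H_{01}(t)U_{\rm op}(t)$. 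Conjugating this time-ordered exponential by $U_{\rm op}(T)$ moves the residual unitary to the left: $U(T)=\widetilde W\,U_{\rm op}(T)$ with $\widetilde W=\mT\e^{-\ii\int_0^T\ud t\,K(t)}$ and $K(t)=G(T,t)\,H_{01}(t)\,G(T,t)^\dagger$, where $G(T,t)=U_{\rm op}(T)U_{\rm op}(t)^\dagger$ is the open-chain propagator over $[t,T]$. So it suffices to produce a unitary $U_0$ supported on $R_0:=\{2-L,\dots,L-1\}$ with $\norm{\widetilde W-U_0}\le 1/16$, because then $\norm{U-U_0U_{\rm op}}=\norm{(\widetilde W-U_0)U_{\rm op}}\le 1/16$, which is \eqref{eq:U=Uop}.

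Next I would localize $K(t)$ using Theorem~\ref{thm:LR}. Since $H_{01}(t)$ is supported on $\{0,1\}$ with $\norm{H_{01}(t)}\le 1$ by \eqref{eq:Hxx<1}, and $G(T,t)$ propagates for total duration $T-t$ on the open chain, \eqref{eq:LRB} gives $\norm{[K(t),B_z]}\le c_{\rm LR}\norm{B_z}\,\e^{\mu(v(T-t)-\mathsf d(\{0,1\},z))}$ for any single-site operator $B_z$ at site $z$. Plugging this into the standard local-approximation lemma --- replace $K(t)$ by its Haar twirl over $R_0^{\rm c}$ (equivalently a normalized partial trace), which is an operator $\overline K(t)$ supported on $R_0$ --- bounds the truncation error by the sum of these tails over $z\in R_0^{\rm c}$. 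Because $R_0^{\rm c}$ is a single arc whose nearest site to $\{0,1\}$ is at distance $L-1$, that sum is a geometric series, $\lesssim \frac{c_{\rm LR}}{1-\e^{-\mu}}\,\e^{\mu(v(T-t)-L+1)}$. I then set $U_0:=\mT\e^{-\ii\int_0^T\ud t\,\overline K(t)}$, which is supported on $R_0$.

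Finally, using $\norm{\mT\e^{-\ii\int A(t)\ud t}-\mT\e^{-\ii\int B(t)\ud t}}\le\int\norm{A(t)-B(t)}\,\ud t$ and integrating the previous bound over $t\in[0,T]$,
\begin{equation}
\norm{\widetilde W-U_0}\ \le\ \int_0^T\norm{K(t)-\overline K(t)}\,\ud t\ \lesssim\ \frac{c_{\rm LR}}{\mu v(1-\e^{-\mu})}\,\e^{\mu(vT-L+1)},
\end{equation}
and the hypothesis \eqref{eq:t>N}, $vT\le L-C$, makes the right-hand side $\le 1/16$ once $C$ is taken to be the logarithm appearing in \eqref{eq:Ce=} (with all absolute constants from the twirl and the geometric tail sums absorbed into that choice), giving \eqref{eq:U=Uop}. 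The main obstacle is exactly this final bookkeeping: keeping track of the factor from two-sided operator spreading, the geometric sum over sites past distance $L-1$, and the time integral, and matching them to the explicit constant \eqref{eq:Ce=}. The one structural subtlety --- handled above by conjugating $W(T)$ with $U_{\rm op}(T)$ before truncating --- is that the lemma requires $U_0$ on the \emph{left} of $U_{\rm op}$ rather than the right.
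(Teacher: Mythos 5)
Your proposal is correct and follows essentially the same approach as the paper: pass to the interaction picture, localize the conjugated boundary term $H_{01}$ via the Lieb--Robinson bound and a Haar twirl over the complement of $\{2-L,\dots,L-1\}$, define $U_0$ as the truncated time-ordered exponential, and apply a Duhamel estimate. Your intermediate step of first writing $U=U_{\rm op}W$ and then conjugating by $U_{\rm op}(T)$ to obtain $U=\widetilde W\,U_{\rm op}$ is equivalent to the paper's direct computation of $UU_{\rm op}^\dagger=\tilde{\mT}\e^{-\ii\int_0^T H_{01}(t)\ud t}$ (your $\widetilde W$ \emph{is} $UU_{\rm op}^\dagger$, and the change of variables $t\mapsto T-t$ maps your $\mT$-ordered expression to the paper's anti-time-ordered one); and while the paper cites Proposition~4.1 of \cite{ourreview} for the twirl bound where you re-derive it by summing single-site commutator tails (picking up an inessential $(1-\e^{-\mu})^{-1}$ factor), this only shifts the value of the absolute constant in \eqref{eq:Ce=}, which you correctly flag as absorbable.
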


We will prove this Lemma in Section \ref{sec:HHKL}. We then similarly cut $U_{\rm op}$ between sites $2L,2L+1$: \begin{equation}\label{eq:Uop-<}
    \norm{ U_{\rm op}- U_I\cdot (U_{\rm L}\otimes U_{\rm R}) }\le \frac{1}{16},
\end{equation}
where $U_{\rm L}$ ($U_{\rm R}$) is supported on the left (right) half chain $x=1,\cdots,2L$ ($x=2L+1,\cdots,4L$), and generated by the Hamiltonian terms inside the half chain. Analogous to $U_0$, $U_I$ is supported on sites $x=L+2,\cdots,3L-1$, where the meaning of subscripts $0,I$ will become clear later. 
To summarize, we choose \begin{equation}\label{eq:tU=}
    \tU = \lr{ U_I \otimes U_0 }\cdot \lr{U_{\rm L}\otimes U_{\rm R} },
\end{equation}
as shown by the four blue blocks in Fig.~\ref{fig:shift}, and the following corollary holds by the triangle inequality from Lemma \ref{lem:HHKL} and the analogous \eqref{eq:Uop-<}: 
\begin{cor}
    \eqref{eq:Ce=} leads to \eqref{eq:U-tU}.
\end{cor}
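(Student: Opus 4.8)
The plan is to obtain the Corollary directly from the two decoupling estimates already recorded — \eqref{eq:U=Uop} in Lemma \ref{lem:HHKL} and its boundary-symmetric counterpart \eqref{eq:Uop-<} — by a single application of the triangle inequality, using only that the operator norm is invariant under multiplication by a unitary and that operators with disjoint support tensor-factorize. The three steps, in order, are: (i) rewrite $\tU$ so that $U_0$ sits on the far left; (ii) replace $U_{\rm op}$ by $U_I(U_{\rm L}\otimes U_{\rm R})$ using \eqref{eq:Uop-<} and unitary invariance; (iii) conclude with the triangle inequality against \eqref{eq:U=Uop}.

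For step (i), I would note that $U_0$ is supported on the arc $\{2-L,\dots,L-1\}$ and $U_I$ on the disjoint arc $\{L+2,\dots,3L-1\}$ of $\mathbb{Z}_N$, so the two commute and $U_0 U_I = U_I\otimes U_0$ as operators on $\mathcal H$ (identity on the remaining sites). Hence from \eqref{eq:tU=},
\begin{equation}
  \tU = \lr{U_I\otimes U_0}\lr{U_{\rm L}\otimes U_{\rm R}} = U_0\, U_I \, \lr{U_{\rm L}\otimes U_{\rm R}}.
\end{equation}
For step (ii), left-multiplying the two operators appearing in \eqref{eq:Uop-<} by the unitary $U_0$ and invoking unitary invariance of $\norm{\cdot}$ gives $\norm{U_0 U_{\rm op} - \tU} = \norm{U_{\rm op} - U_I\lr{U_{\rm L}\otimes U_{\rm R}}}\le \frac{1}{16}$.

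Step (iii) then combines this with \eqref{eq:U=Uop}:
\begin{equation}
  \norm{U-\tU} \le \norm{U - U_0 U_{\rm op}} + \norm{U_0 U_{\rm op} - \tU} \le \frac{1}{16}+\frac{1}{16} = \frac{1}{8},
\end{equation}
which is exactly \eqref{eq:U-tU}. The hypothesis \eqref{eq:t>N} with the constant $C$ of \eqref{eq:Ce=} enters only through the premises of Lemma \ref{lem:HHKL} and of its boundary-symmetric analogue \eqref{eq:Uop-<}, so no further constraint on $T$ is needed. I do not expect any real obstacle at this stage: the genuine content lies entirely in deriving the HHKL-type estimates \eqref{eq:U=Uop} and \eqref{eq:Uop-<} from the Lieb-Robinson bound \eqref{eq:LRB} with the explicit threshold $C = 1 + \log(16 c_{\rm LR}/\mu v)$, which is postponed to Section \ref{sec:HHKL}; the only point requiring care here is the disjoint-support bookkeeping that lets $U_0 U_I$ be identified with the factor $U_I\otimes U_0$ in $\tU$.
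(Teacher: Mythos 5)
Your proposal is correct and follows exactly the paper's intended argument: the Corollary is obtained by combining \eqref{eq:U=Uop} and \eqref{eq:Uop-<} via the triangle inequality (with the unitary-invariance and disjoint-support bookkeeping you spell out), giving $\norm{U-\tU}\le \tfrac{1}{16}+\tfrac{1}{16}=\tfrac{1}{8}$.
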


To conclude the proof, it remains to show \eqref{eq:tU-S>}. We
focus on a particular initial state $\rho_i$, and show it is hard to prepare the target final state \begin{equation}\label{eq:rhof=rhoi}
    \rho_f: = U_{\mathrm{sh}} \rho_i U_{\mathrm{sh}}^\dagger,
\end{equation}
using $\tU$.
The state $\rho_i$ is defined as follows. We choose identities on spin $L+1,\cdots,3L$, and $\ket{0}$ otherwise: \begin{equation}\label{eq:rhoi}
    \rho_i = \bigotimes_{x=1-L}^L \ket{0}_x\bra{0} \otimes \bigotimes_{x=L+1}^{3L} \frac{I_x}{2},
\end{equation}
so that \begin{equation}\label{eq:rhof}
    \rho_f= \bigotimes_{x=-L}^{L-1} \ket{0}_x\bra{0} \otimes \bigotimes_{x=L}^{3L-1} \frac{I_x}{2}.
\end{equation}
As a result, $U_I$ acts inside the identity region, for either $\rho_i$ or $\rho_f$; similar for $U_0$, which justifies the subscripts. For these states, we prove the following Lemma in Section \ref{sec:shift_rho}: 

\begin{lem}\label{lem:shift_rho} $\tU$ does not efficiently implement shift: 
\begin{equation}\label{eq:UrhoU-rho>}
    \norm{\tU \rho_i \tU^\dagger - \rho_f}_1 \ge 1/2.
\end{equation}
\end{lem}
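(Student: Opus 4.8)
The plan is to certify the distance by testing $\tU\rho_i\tU^\dagger$ and $\rho_f$ against the support projector of $\rho_f$. Write $\rho_f = 2^{-2L}\Pi_f$, where $\Pi_f := \bigotimes_{x=-L}^{L-1}\ket{0}_x\bra{0}\otimes\bigotimes_{x=L}^{3L-1} I_x$ is a projector of rank $2^{2L}$ with $\tr(\Pi_f\rho_f)=1$. Since for any state $\rho$ the trace distance dominates the distinguishing bias of the $\Pi_f$-measurement,
\begin{equation}
\norm{\rho-\rho_f}_1 \;\ge\; 2\,\tr\big(\Pi_f(\rho_f-\rho)\big) \;=\; 2\big(1-\tr(\Pi_f\,\rho)\big),
\end{equation}
it suffices to show $\tr\big(\Pi_f\,\tU\rho_i\tU^\dagger\big)\le \tfrac12$; this in fact yields the stronger bound $\norm{\tU\rho_i\tU^\dagger-\rho_f}_1\ge 1$. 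Writing $G:=\{-L,\dots,L-1\}$ for the ``$\ket0$-region'' of $\rho_f$, so that $\Pi_f=\ket0_G\bra0\otimes I_{G^{\rm c}}$ with $\ket0_G:=\bigotimes_{x\in G}\ket0_x$, what is needed is that the reduced state $\tr_{G^{\rm c}}[\tU\rho_i\tU^\dagger]$ has overlap at most $\tfrac12$ with $\ket0_G$.

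Next I would compute that reduced state, which is where the structure $\tU=(U_I\otimes U_0)(U_{\rm L}\otimes U_{\rm R})$ enters. The relevant bookkeeping: $\mathrm{supp}\,U_I=\{L+2,\dots,3L-1\}$ is disjoint from $G$ (it lies in the identity region of $\rho_f$), so $U_I$ commutes past $\Pi_f$ and disappears under $\tr_{G^{\rm c}}$; whereas $\mathrm{supp}\,U_0=\{2-L,\dots,L-1\}\subset G$, so $U_0$ only conjugates within $G$. Since $\rho_i$ is a product state across the cuts $(0,1)$ and $(2L,2L+1)$, $(U_{\rm L}\otimes U_{\rm R})\rho_i(U_{\rm L}\otimes U_{\rm R})^\dagger=\sigma_A\otimes\sigma_B$ with $\sigma_A:=U_{\rm L}\,(\tr_{A^{\rm c}}\rho_i)\,U_{\rm L}^\dagger$ on the arc $A=\{1,\dots,2L\}$ and $\sigma_B:=U_{\rm R}\,(\tr_{B^{\rm c}}\rho_i)\,U_{\rm R}^\dagger$ on the arc $B=\{2L+1,\dots,4L\}$. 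Using $G^{\rm c}=(A\setminus A_0)\sqcup(B\setminus B_0)$ with $A_0:=G\cap A=\{1,\dots,L-1\}$ and $B_0:=G\cap B=\{-L,\dots,0\}$, the trace over $G^{\rm c}$ factorizes:
\begin{equation}
\tr_{G^{\rm c}}[\tU\rho_i\tU^\dagger] \;=\; U_0\big(\tau_{A_0}\otimes\tau_{B_0}\big)U_0^\dagger,\qquad \tau_{A_0}:=\tr_{A\setminus A_0}\sigma_A,\quad \tau_{B_0}:=\tr_{B\setminus B_0}\sigma_B .
\end{equation}
Hence $\tr\big(\Pi_f\,\tU\rho_i\tU^\dagger\big)=\braket{\psi|\tau_{A_0}\otimes\tau_{B_0}|\psi}$ with $\ket\psi:=U_0^\dagger\ket0_G$, and since $\tau_{A_0}\le I$ this is $\le\braket{\psi|I\otimes\tau_{B_0}|\psi}=\tr(\tau_{B_0}\varrho_\psi)\le\norm{\tau_{B_0}}$, where $\varrho_\psi$ is the reduction of $\ket\psi$ to $B_0$.

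The crux is then the operator-norm estimate $\norm{\tau_{B_0}}\le\tfrac12$. The restriction of $\rho_i$ to $B$ has exactly $L$ maximally mixed qubits — those of $\rho_i$'s identity region lying in $B$, namely $\{2L+1,\dots,3L\}$ — and $L$ pure ones, so $\sigma_B = 2^{-L}\Pi_B$ with $\Pi_B$ a projector of rank $2^{L}$; meanwhile $B\setminus B_0 = \{2L+1,\dots,3L-1\}$ has only $L-1$ sites. Therefore, for any unit vector $\ket\phi$ on $B_0$ and the computational basis $\{\ket c\}$ of $B\setminus B_0$,
\begin{equation}
\braket{\phi|\tau_{B_0}|\phi} \;=\; 2^{-L}\sum_{c}\braket{\phi,c|\Pi_B|\phi,c} \;\le\; 2^{-L}\cdot 2^{\,\abs{B\setminus B_0}} \;=\; 2^{-L}\cdot 2^{L-1}\;=\;\tfrac12 ,
\end{equation}
each summand being at most $\norm{\Pi_B}=1$. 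Chaining the three displays gives $\tr\big(\Pi_f\,\tU\rho_i\tU^\dagger\big)\le\tfrac12$ and hence $\norm{\tU\rho_i\tU^\dagger-\rho_f}_1\ge 1\ge\tfrac12$.

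The hard part is this last inequality together with the choices that isolate it: one must pick the tested region $G$ so that exactly $U_I$ decouples from it while $U_0$ stays internal, and then exploit the asymmetry $\abs{B_0}=L+1>L-1=\abs{B\setminus B_0}$ — i.e.\ that the cut $(0,1)$ sits off-center relative to the target $\ket0$-region. This is the precise form of the ``signaling'' obstruction advertised in the introduction: implementing the shift would force one qubit's worth of the entropy of $\rho_i$ to be moved across a seam where $\tU$ is nearly factorized, which a unitary on $B$ alone cannot achieve. Testing the opposite arc ($\sigma_A$, $A_0$) instead gives $\abs{A\setminus A_0}=L+1\ge\abs{A_0}$ and hence no useful bound, so the choice of side — fixed by the direction of the shift — is essential.
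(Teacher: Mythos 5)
Your proof is correct, and it executes the shared underlying idea (the one-qubit purity mismatch forced by the off-center cuts, combined with the block structure $\tU=(U_I\otimes U_0)(U_{\rm L}\otimes U_{\rm R})$) through a genuinely different route than the paper. The paper conjugates by $(U_I\otimes U_0)^\dagger$, traces out the entire right half-chain, and then compares the two reduced states on the left half spectrally: $U_{\rm L}\rho_{i,\rm L}U_{\rm L}^\dagger$ is flat with eigenvalues $2^{-L}$, while $\norm{\tr_{\rm R}(U_0^\dagger\rho_f U_0)}\le 2^{-L-1}$, and the $1$-norm is lower-bounded by summing diagonal entries against a sign unitary. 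You instead use the support projector $\Pi_f$ of $\rho_f$ as a Helstrom-type witness, trace out $G^{\rm c}$ (pieces of both arcs), and push the whole estimate onto the operator-norm bound $\norm{\tau_{B_0}}\le 1/2$ on the right arc, where the evolved $\rho_i$ is too mixed (rank-$2^L$ flat spectrum, only $L-1$ sites traced out) to overlap with the pure state $\rho_f$ demands on $B_0$. In effect the roles are mirrored: the paper's witness is built from the evolved initial state and the "small norm" comes from the target, while yours is built from the target and the "small norm" comes from the evolved initial state. Your version buys a cleaner reduction (no monotonicity-of-trace-distance step, no explicit eigenbasis bookkeeping) and the stronger constant $\norm{\tU\rho_i\tU^\dagger-\rho_f}_1\ge 1$; it also only uses that the pure region of $\rho_i$ is a product pure state, so it generalizes to the states $\rho_i(z)$ exactly as the paper needs later for Theorem \ref{thm:Frob_shift}. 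The paper's formulation, on the other hand, phrases the obstruction directly as a statement about reduced density matrices across the single L/R seam, which is the form that is promoted verbatim to the Super2 argument. Your closing observation that testing the $A$ arc gives only the trivial bound ($2^{-L}\cdot 2^{L+1}=2$) correctly identifies why the direction of the shift fixes which seam carries the obstruction.
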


With these two lemmas in  hand, \eqref{eq:UrhoU-rho>} then leads to \eqref{eq:tU-S>} because 
\begin{align}
    \norm{\tU \rho_i \tU^\dagger - U_{\mathrm{sh}} \rho_i U_{\mathrm{sh}}^\dagger}_1 &= \norm{\lr{\tU - U_{\mathrm{sh}}} \rho_i \tU^\dagger + U_{\mathrm{sh}} \rho_i \lr{\tU- U_{\mathrm{sh}}}^\dagger}_1 
    \le \norm{\lr{\tU - U_{\mathrm{sh}}} \rho_i \tU^\dagger}_1 + \norm{U_{\mathrm{sh}} \rho_i \lr{\tU- U_{\mathrm{sh}}}^\dagger}_1 \nonumber\\
    &\le \norm{\tU - U_{\mathrm{sh}}} + \norm{\tU- U_{\mathrm{sh}}^\dagger}=2\norm{\tU - U_{\mathrm{sh}}}.
\end{align}
We used triangle inequality in the first line, and H\"{o}lder's inequality $\norm{AB}_1\le \norm{A} \norm{B}_1$ with $\norm{\rho_i}_1=\norm{U}=1$ in the second line. This concludes the proof: we have shown \eqref{eq:U-tU}, and how \eqref{eq:U-shift>} follows from \eqref{eq:U-tU} and \eqref{eq:tU-S>}.
\end{proof}

\subsection{Circuit approximation of Hamiltonian evolution}\label{sec:HHKL}
\begin{proof}[Proof of Lemma \ref{lem:HHKL}]
For notational simplicity we assume $H$ is time-independent in this proof, because the time-dependent case generalizes immediately.
In the interaction picture, \begin{equation}
    U U_{\rm op}^\dagger = \e^{-\ii T H} \e^{\ii T (H-H_{01})} = \tilde{\mT} \e^{-\ii \int^T_0 H_{01}(t) \ud t}, \quad \mathrm{where} \quad H_{01}(t):= \e^{-\ii t (H-H_{01})} H_{01} \e^{\ii t (H-H_{01})},
\end{equation}
and $\tilde{\mT}$ is anti-time ordering. This can be verified by taking time derivative. 

$H_{01}(t)$ obeys Lieb-Robinson bound \eqref{eq:LRB}, so it is largely supported inside the region $S_r:=\{x:\mathsf{d}(x,\{0,1\})\le r\}=\{2-L,\cdots,L-1\}$ with $r=L-2$ (namely, $S_r$ is the initial support $\{0,1\}$ expanded by distance $r$). More precisely, Theorem \ref{thm:LR} implies (see e.g. Proposition 4.1 in \cite{ourreview}): \begin{equation}\label{eq:H-H'<LR}
    \norm{H_{01}(t) - \PP_r H_{01}(t)} \le c_{\rm LR} \e^{\mu(vt-L+1)},
\end{equation}
where we have used $\norm{H_{01}}\le1$ from \eqref{eq:Hxx<1}, and  \begin{equation}\label{eq:Pr=}
    \PP_r H_{01}(t):= \int_{S_r^{\rm c}} \ud V V^\dagger H_{01}(t) V
\end{equation}
with $V$ being a Haar random unitary acting on the complement $S_r^{\rm c}$. $\PP_r$ is a superoperator (linear transformation on the vector space of  operators) that selects the part $\PP_r H_{01}(t)$ whose support is contained inside $S_r$. Define 
\begin{equation}
    U_0 := \tilde{\mT} \e^{-\ii \int^T_0 \ud t\, \PP_r H_{01}(t) },
\end{equation}
supported in $S$. By the Duhamel identity, \begin{align}\label{eq:UUop-U0<}
    \norm{UU_{\rm op}^\dagger - U_0} &= \norm{-\ii \int^T_0 \ud t \tilde{\mT} \e^{-\ii \int^t_0 H_{01}(t_1) \ud t_1} \mlr{H_{01}(t) - \PP_r H_{01}(t)} \e^{-\ii \int^T_t \PP_r H_{01}(t_2) \ud t_2} } \nonumber\\
    &\le \int^T_0 \ud t \norm{H_{01}(t) - \PP_r H_{01}(t)} \le \frac{c_{\rm LR}}{\mu v} \e^{\mu(vT-L+1)},
\end{align}
where we have used \eqref{eq:H-H'<LR}. Because $\norm{U - U_0U_{\rm op}^\dagger}=\norm{UU_{\rm op}^\dagger - U_0}$, \eqref{eq:U=Uop} holds by setting \begin{equation}
    \frac{c_{\rm LR}}{\mu v} \e^{\mu(vT-L+1)} \le \frac{1}{16},
\end{equation}
which is equivalent to \eqref{eq:Ce=}. This lemma is, therefore, reminiscent of the HHKL algorithm \cite{HHKL}.
\end{proof}

\subsection{The quantum circuit approximation does not shift the particular state}\label{sec:shift_rho}
\begin{proof}[Proof of Lemma \ref{lem:shift_rho}]
    We manipulate the goal \eqref{eq:UrhoU-rho>} as follows: \begin{align}
    \frac{1}{2} &\le \norm{\tU \rho_i \tU^\dagger - \rho_f}_1 \nonumber\\
    &= \norm{\lr{U_{\rm L}\otimes U_{\rm R} } \rho_i \lr{U_{\rm L}\otimes U_{\rm R} }^\dagger - \lr{U_{\rm I}\otimes U_{\rm 0} }^\dagger \rho_f  \lr{U_{\rm I}\otimes U_{\rm 0}} }_1 \nonumber\\
    &= \norm{\lr{U_{\rm L}\otimes U_{\rm R} } \rho_i \lr{U_{\rm L}\otimes U_{\rm R} }^\dagger - U_{\rm 0}^\dagger \rho_f   U_{\rm 0} }_1. \label{eq:1/2<}
\end{align}
In the second line we have used $\norm{VOV^\dagger}_1 = \norm{O}_1$ for any unitary $V$. In the last line we used the fact that $U_I$ acts trivially on the identity region in $\rho_f$. Taking partial trace in the last line of \eqref{eq:1/2<} and using monotonicity of trace distance: $\norm{\tr_{\rm R}(\rho-\rho')}_1\le \norm{\rho-\rho'}_1$, it then suffices to show \begin{equation}\label{eq:<rhoiL-}
    \frac{1}{2} \le \norm{ U_{\rm L} \rho_{i, \rm L} U_{\rm L}^\dagger - \tr_{\rm R}\lr{ U_{\rm 0}^\dagger \rho_f   U_{\rm 0}} }_1.
\end{equation}

$\rho_{i, \rm L}$ is the restriction of $\rho_i$ in the left half chain that contains $L$ identities, so the density matrix has $2^L$ nonzero eigenvalues (Schmidt values) that are all equal to $2^{-L}$. In other word, there is a basis $\{\ket{j}:j=1,\cdots,2^{2L}\}$ such that \begin{equation}\label{eq:jbasis}
    U_{\rm L} \rho_{i, \rm L} U_{\rm L}^\dagger= 2^{-L}\begin{pmatrix}
        I & 0 \\ 0 & 0
    \end{pmatrix},
\end{equation} 
where each block is $2^L\times 2^L$.
On the other hand, $\rho_f$ has $L+1$ identities in the left half chain, and $U_0$ acts on an all-zero region of $\rho_f$ (across the left and right half chain). As a result \begin{equation}\label{eq:norm<2L}
    \norm{\tr_{\rm R}\lr{ U_{\rm 0}^\dagger \rho_f   U_{\rm 0}}} = 2^{-L-1} \norm{I\otimes \tr_{\rm R}\lr{U_{\rm 0}^\dagger \ket{0}\bra{0} U_{\rm 0}} }\le 2^{-L-1},
\end{equation} 
because the operator inside the norm is a density matrix $\rho$ with $\norm{\rho}\le1$. In words, this holds because in region R $\rho_f$ already is a mixed state with $L+1$ bits, and the unitary $U_0$ can only further increase the entanglement between R and L as it acts only on the sites where $\rho_f$ is pure.

Such spectral difference leads to large $1$-norm of operator $O:=U_{\rm L} \rho_{i, \rm L} U_{\rm L}^\dagger - \tr_{\rm R}\lr{ U_{\rm 0}^\dagger \rho_f   U_{\rm 0}}$, because \begin{align}\label{eq:rho-rho>1/2}
    \norm{O}_1 &= \max_V \abs{\tr OV} \ge \sum_j \abs{O_{jj}} \ge \sum_{j\le 2^L} O_{jj} \nonumber\\
    & = \sum_{j\le 2^L } 2^{-L} - \mlr{\tr_{\rm R}\lr{ U_{\rm 0}^\dagger \rho_f   U_{\rm 0}}}_{jj} \nonumber\\
    &\ge \sum_{j\le 2^L } (2^{-L} - 2^{-L-1}) = \frac{1}{2}.
\end{align}
Here in the first inequality, we have chosen the particular unitary $V$ that is diagonal in the basis \eqref{eq:jbasis}, with the $j$-th diagonal element $V_{jj}=\mathrm{sign}(O_{jj})$. In the second line, we have used \eqref{eq:jbasis} in this basis. The last line comes from $\mlr{\tr_{\rm R}\lr{ U_{\rm 0}^\dagger \rho_f   U_{\rm 0}}}_{jj}\le \norm{ \tr_{\rm R}\lr{ U_{\rm 0}^\dagger \rho_f   U_{\rm 0}} }$ and \eqref{eq:norm<2L}.
This proves \eqref{eq:<rhoiL-}, which then leads to \eqref{eq:UrhoU-rho>}.
\end{proof}

\subsection{Long-range interactions}
Having gained the key intuition for why the shift is hard to implement using local interactions, we now turn to our main goal, namely the study of implementing shift using power-law interactions. In what follows, we assume (\ref{eq:powerlawdecay}) and fix units of time by setting $K=1$.

To proceed, we will need the following Lieb-Robinson bound:
\begin{thm}\cite{strictlylinear_KS,tran2021,ourreview}
Assuming (\ref{eq:powerlawdecay}), 
there exists constants $c_{\rm LR},v,\epsilon>0$ where $\epsilon$ can be arbitrarily small, such that for any local operator $A$,
\begin{equation}\label{eq:LRB_power}
    \norm{A(t)-\PP_r A(t)} \le \norm{A} g_\alpha(t,r)
\end{equation}
where \begin{equation}
    g_\alpha(t,r):=c_{\rm LR} \cdot \left\{\begin{array}{ccc}
        t^{2}(r-vt)^{1-\alpha} & (t\le r/v), & \alpha>3 \\
        \lr{\frac{t}{r^{\alpha-2-\epsilon}}}^{\frac{\alpha-1}{\alpha-2}-\frac{\epsilon}{2}} & (t\le r^{\alpha-2-\epsilon}/v), & 2<\alpha\le 3
    \end{array}\right..
\end{equation}
Here $\PP_r A(t)$ is supported only on sites of distance $\le r$ to the initial local operator $A$. 
\end{thm}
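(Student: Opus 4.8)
The plan is to reduce the localization statement \eqref{eq:LRB_power} to a family of Lieb--Robinson commutator bounds, and then to prove those by an iterative sharpening of the light-cone function that starts from the crude Dyson series. For the reduction I would use the standard fact (see e.g.\ Proposition~4.1 of \cite{ourreview}) that, since $\PP_r$ is the Haar twirl over unitaries supported on the complement of the $r$-neighborhood of $\mathrm{supp}\,A$,
\[
\norm{A(t)-\PP_r A(t)}\ \le\ 2\!\!\sum_{y\,:\,\mathsf{d}(y,\,\mathrm{supp}\,A)>r}\ \max_{\norm{B_y}\le1}\norm{[A(t),B_y]}.
\]
Hence it suffices to prove a commutator bound $\norm{[A(t),B_y]}\le\norm{A}\norm{B_y}\,h_\alpha(t,\mathsf{d}(y,\mathrm{supp}\,A))$ with $h_\alpha$ decaying like $\mathsf{d}^{-\alpha}$; in one dimension the number of sites at a fixed distance is $O(1)$, so the sum over $y$ costs exactly one power of the distance and reproduces the $(r-vt)^{1-\alpha}$ (and, analogously, the $2<\alpha\le3$ expression).

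To obtain a first such bound I would iterate the Heisenberg equation of motion: $\norm{[A(t),B_y]}$ is controlled by $\norm{A}\norm{B_y}$ times a sum over ``paths'' $y=z_0,z_1,\dots,z_k$ ending in $\mathrm{supp}\,A$ of $\tfrac{(2t)^k}{k!}\prod_i\norm{h_{z_{i-1}z_i}}\le\tfrac{(2t)^k}{k!}\prod_i\mathsf{d}(z_{i-1},z_i)^{-\alpha}$. Resumming crudely --- each ``short'' hop contributes a convergent spatial factor $\sum_d d^{-\alpha}=O(1)$ for $\alpha>1$, and one ``long'' hop must bridge most of the distance --- gives a preliminary Hastings--Koma-type estimate $\norm{[A(t),B_y]}\lesssim\norm{A}\norm{B_y}\,(\e^{vt}-1)\,\mathsf{d}(y,\mathrm{supp}\,A)^{-\alpha}$, valid for all $\alpha>1$ but with a useless ($t\sim\log r$) light cone. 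It serves only as input to a bootstrap.

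The substantive step is that bootstrap, following \cite{strictlylinear_KS,tran2021}. I would split $H=H^{\le\ell}+H^{>\ell}$ into interactions of range $\le\ell$ (total strength $O(1)$ for $\alpha>1$, hence a genuine linear light cone with velocity $v_\ell=O(\ell)$ by Theorem~\ref{thm:LR}) and the tail (per-site strength $O(\ell^{1-\alpha})$), work in the interaction picture of $H^{\le\ell}$, and peel off the tail couplings one at a time by Duhamel. Physically, operator growth from $\mathrm{supp}\,A$ to $y$ happens by ballistic spreading under $H^{\le\ell}$ punctuated by rare long-range ``jumps'' from the tail, and this turns a current bound $C_n$ into a new one,
\[
C_{n+1}(r,t)\ \lesssim\ (\text{short-range source})+\int_0^t\!\ud s\sum_{\text{jump}}\norm{h^{>\ell}}\;C_n(\,\cdot\,,s)\,C_n(\,\cdot\,,t-s).
\]
For $\alpha>2d+1=3$ (here $d=1$), optimizing the truncation scale --- in general a hierarchy of scales --- makes this iteration converge inside the linear cone $vt<r$ to a fixed point of the claimed form $c_{\rm LR}\,t^2(r-vt)^{1-\alpha}$, the exponent $\alpha-1=\alpha-d$ arising from optimizing the single-jump weight $\mathsf{d}^{-\alpha}$ against the $O(\mathsf{d})$ admissible endpoints in one dimension. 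For $2<\alpha\le3$ the recursion does not close at a linear cone; balancing $\ell$ against $t$ shows it stabilizes only up to $t\sim r^{\alpha-2}/v$, and running the bootstrap a finite rather than infinite number of rounds produces the exponent $\tfrac{\alpha-1}{\alpha-2}-\tfrac{\epsilon}{2}$, with $\epsilon>0$ measuring how close to the fixed point one pushes.

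The main obstacle I expect is precisely this bookkeeping: keeping the combinatorial factors and nested time integrals tight enough that the recursion genuinely converges (not merely contracts on a bounded interval), and choosing the (possibly scale-dependent) truncation $\ell$ so as to recover the \emph{sharp} exponent $\alpha-d$ rather than a weaker one. Getting the $2<\alpha\le3$ exponent $\tfrac{\alpha-1}{\alpha-2}$ right up to $\epsilon$, and checking it against the optimal state-transfer protocols known in that regime, is the most delicate point; for these steps I would defer to the analyses of \cite{strictlylinear_KS,tran2021} rather than redo them.
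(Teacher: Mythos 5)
Your proposal sets out to (re)derive the power-law Lieb--Robinson bounds themselves, deferring the substantive bootstrap estimates to \cite{strictlylinear_KS,tran2021}. That is in fact all the paper does for generic $\alpha$: these bounds are simply cited, so sketching their proof machinery is neither required nor, as written, complete enough to replace the citations (the reduction from the Haar twirl to single-site commutators needs the telescoping over single-site twirls spelled out, and the claim that ``a finite number of bootstrap rounds produces the exponent $\frac{\alpha-1}{\alpha-2}-\frac{\epsilon}{2}$'' is precisely the part you defer). None of this, however, is the real issue.

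The genuine gap is the endpoint $\alpha=3$. The results you defer to cover $\alpha>3$ (strictly, $\alpha>2d+1$) and $2<\alpha<3$ (strictly), while the statement to be proved includes $\alpha=3$ in its second branch; your proposal treats $2<\alpha\le 3$ as one uniform regime and therefore never addresses the only point that actually requires an argument --- which is exactly what the paper's proof consists of. The fix is short: a Hamiltonian obeying \eqref{eq:powerlawdecay} with $\alpha=3$ also obeys it with exponent $\alpha'=3-\epsilon/2\in(2,3)$, so one applies the known bound at $(\alpha',\epsilon'=\epsilon/2)$, notes that the time window $t\le r^{\alpha'-2-\epsilon'}/v$ coincides with $t\le r^{\alpha-2-\epsilon}/v$, and then uses that the base satisfies $vt/r^{\alpha-2-\epsilon}\le 1$ together with $\frac{\alpha'-1}{\alpha'-2}\ge\frac{\alpha-1}{\alpha-2}$ to weaken the exponent to $\frac{\alpha-1}{\alpha-2}-\frac{\epsilon}{2}$, absorbing the leftover powers of $v$ into $c_{\rm LR}$. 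You should add this (or an equivalent) interpolation step; without it the case $\alpha=3$ of \eqref{eq:LRB_power} is not established by citation alone.
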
  

\begin{proof}
\eqref{eq:LRB_power} is contained in the literature except for the point $\alpha=3$, which we focus on here. For any $\epsilon\in(0,1)$, a system with $\alpha=3$ also satisfies \eqref{eq:powerlawdecay} with exponent $\alpha':=3-\epsilon/2\in (2,3)$. Then one can use \eqref{eq:LRB_power} with $\alpha$ set to $\alpha'$ and $\epsilon$ set to $\epsilon':=\epsilon/2$, so that \begin{align}\label{eq:alpha3}
    \norm{A(t)-\PP_r A(t)} &\le c_{\rm LR} \norm{A} \lr{\frac{t}{r^{\alpha'-2-\epsilon'}}}^{\frac{\alpha'-1}{\alpha'-2}-\frac{\epsilon'}{2}} = c_{\rm LR} \norm{A} \lr{\frac{t}{r^{\alpha-2-\epsilon}}}^{\frac{\alpha'-1}{\alpha'-2}-\frac{\epsilon'}{2}} = c_{\rm LR}' \norm{A} \lr{\frac{vt}{r^{\alpha-2-\epsilon}}}^{\frac{\alpha'-1}{\alpha'-2}-\frac{\epsilon'}{2}} \nonumber\\
    &\le c_{\rm LR}' \norm{A} \lr{\frac{vt}{r^{\alpha-2-\epsilon}}}^{\frac{\alpha-1}{\alpha-2}-\frac{\epsilon}{2}}, \quad \mathrm{for} \quad t\le r^{\alpha'-2-\epsilon'}/v= r^{\alpha-2-\epsilon}/v.
\end{align}
Here $c_{\rm LR}'=c_{\rm LR} v^{-\frac{\alpha'-1}{\alpha'-2}+\frac{\epsilon'}{2}}$ in the first line, and we have used the base $vt/r^{\alpha-2-\epsilon}\le 1$ and $\frac{\alpha'-1}{\alpha'-2}\ge \frac{\alpha-1}{\alpha-2}$ in the second line. \eqref{eq:alpha3} then becomes \eqref{eq:LRB_power} at $\alpha=3$ by redefining the constant $c'_{\rm LR}v^{\frac{\alpha-1}{\alpha-2}-\frac{\epsilon}{2}}\rightarrow c_{\rm LR}$.
\end{proof}

Theorem \ref{thm:shift} can now be generalized to power law interactions:

\begin{thm}\label{thm:T>power}
For power-law interaction \eqref{eq:powerlawdecay}, if $L\ge L_0$ for some constant $L_0$, and the evolution time $T$ satisfies \begin{equation}\label{eq:T>power}
    vT\le \left\{\begin{array}{cc}
       L-C, & \alpha\ge 4 \\
       C L^{\frac{\alpha-1}{3}},  & 3<\alpha<4 \\
       C L^{\frac{[\alpha-1-\epsilon(\alpha/2-1)](\alpha-2-\epsilon)}{2\alpha-3-\epsilon(\alpha/2-1)}},  & 2<\alpha\le 3
    \end{array}\right.
\end{equation}
for  any $\epsilon>0$, and constant $0<C<\infty$ that may depend on $\epsilon$, then \eqref{eq:U-shift>} holds: namely, $U$ is far from $U_{\mathrm{sh}}$.
\end{thm}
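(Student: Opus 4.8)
\textbf{Proof proposal for Theorem \ref{thm:T>power}.}

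The plan is to repeat the three-part structure of the proof of Theorem \ref{thm:shift}, replacing the exponential Lieb-Robinson bound \eqref{eq:LRB} with the power-law decay \eqref{eq:LRB_power} at each place where operators were localized. As before, it suffices to produce a circuit approximation $\tU = (U_I \otimes U_0)\cdot(U_{\rm L}\otimes U_{\rm R})$ satisfying \eqref{eq:U-tU} and \eqref{eq:tU-S>}; once \eqref{eq:U-tU} is established, Lemma \ref{lem:shift_rho} and the subsequent triangle/H\"older argument carry over verbatim, since those steps only used the product structure of $\tU$ and the spectral mismatch between $\rho_i$ and $\rho_f$ — not the specific form of the light cone. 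So the entire content of the generalization is in re-deriving the analogue of Lemma \ref{lem:HHKL}: cutting the bond $H_{01}(t)$ (and, by the same argument, the bond at $2L,2L+1$) while losing at most $1/16$ in operator norm.

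First I would, exactly as in Section \ref{sec:HHKL}, pass to the interaction picture so that $UU_{\rm op}^\dagger = \tilde{\mT}\e^{-\ii\int_0^T H_{01}(t)\,\ud t}$ with $H_{01}(t)$ evolved under the Hamiltonian with the $01$ bond removed, define $U_0 := \tilde{\mT}\e^{-\ii\int_0^T \PP_r H_{01}(t)\,\ud t}$ with $r = L-2$, and use Duhamel to get $\norm{UU_{\rm op}^\dagger - U_0}\le \int_0^T \ud t\,\norm{H_{01}(t)-\PP_r H_{01}(t)}$. Now instead of \eqref{eq:H-H'<LR} I substitute \eqref{eq:LRB_power}, so the integrand is bounded by $g_\alpha(t,L-2)$ (using $\norm{H_{01}}\le 1$, which holds here too since the nearest-neighbour bond has $\mathsf{d}=1$ and $K=1$). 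The requirement becomes $\int_0^T g_\alpha(t,L-2)\,\ud t \le 1/16$. I then bound this integral crudely by $T\cdot \max_{t\le T} g_\alpha(t,L-2)$, or more carefully by integrating the explicit power of $t$; in either regime the result is a bound of the form (const)$\cdot T^{a}\cdot L^{-b}$ for explicit exponents $a,b$ read off from $g_\alpha$. Setting this $\le 1/16$ and solving for $T$ produces precisely the three cases of \eqref{eq:T>power}: for $\alpha\ge 4$ the $(L-vt)^{1-\alpha}$ factor is summable/bounded and one recovers the linear-in-$L$ bound with an additive constant $C$; for $3<\alpha<4$ the $(L-vt)^{1-\alpha}$ tail is no longer uniformly small when $vt$ approaches $L$, so one must keep $vt$ a constant fraction of $L$ away, and combining $t^2(r-vt)^{1-\alpha}\lesssim 1/L$ with $t\sim L$ forces $T\lesssim L^{(\alpha-1)/3}$; for $2<\alpha\le 3$ one feeds in the second line of $g_\alpha$ with its $\epsilon$-shifted exponents and the algebra of solving $(vT/L^{\alpha-2-\epsilon})^{(\alpha-1)/(\alpha-2)-\epsilon/2}\lesssim 1/L$ (note: here I should double-check whether the intended right-hand side is $1/16$ or whether an extra factor of $L$ appears from the $t$-integration, as the exponent in \eqref{eq:T>power} in the $2<\alpha\le 3$ case is more intricate than a naive $\max_t$ estimate suggests) gives the stated exponent $\frac{[\alpha-1-\epsilon(\alpha/2-1)](\alpha-2-\epsilon)}{2\alpha-3-\epsilon(\alpha/2-1)}$.

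The main obstacle I anticipate is exactly this last bookkeeping: getting the $\epsilon$-dependence and the precise exponent in the $2<\alpha\le 3$ window correct. The Duhamel step gives an integral $\int_0^T g_\alpha(t,r)\,\ud t$ rather than just $g_\alpha(T,r)$, and since $g_\alpha$ is a power of $t$ (with exponent $\frac{\alpha-1}{\alpha-2}-\frac{\epsilon}{2} > 1$) the integration produces an extra factor $T/(\frac{\alpha-1}{\alpha-2}+1-\frac{\epsilon}{2})$, effectively raising the power of $T$ by one; tracking how this interacts with the denominator $r^{\alpha-2-\epsilon}$ raised to that same power is what yields the somewhat baroque exponent, and one has to be careful that the constant absorbed depends on $\epsilon$ (hence the statement "constant $C$ that may depend on $\epsilon$") and that $L\ge L_0(\epsilon)$ is needed so that the base $vt/r^{\alpha-2-\epsilon}\le 1$ stays in the regime where \eqref{eq:LRB_power} applies. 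A secondary, purely cosmetic subtlety is that the same cut must be performed a second time at the bond $(2L,2L+1)$ to isolate $U_{\rm L}\otimes U_{\rm R}$; but by translation invariance of the argument this is literally the same Lemma applied to a different bond and costs another $1/16$, so the two cuts together give $\norm{U-\tU}\le 1/8$, matching \eqref{eq:U-tU} (with the slightly weaker $1/8$ rather than $1/8$-from-two-$1/16$'s — here one should confirm the budget: Lemma \ref{lem:HHKL} used $1/16$ and \eqref{eq:Uop-<} used $1/16$, summing to $1/8$, and then the triangle inequality with \eqref{eq:tU-S>}$\,\ge 1/4$ still yields \eqref{eq:U-shift>}). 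Everything downstream of \eqref{eq:U-tU} — Lemma \ref{lem:shift_rho}, the partial-trace/monotonicity step, and the spectral argument \eqref{eq:rho-rho>1/2} — is independent of $\alpha$ and needs no change.
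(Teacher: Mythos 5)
Your plan inherits the cut structure from the strictly local case too literally, and this is where it breaks. In the power-law setting the Hamiltonian contains a coupling $H_{xy}$ for every pair $\{x,y\}$, so ``cutting'' the ring between sites $0$ and $1$ is not accomplished by deleting the single bond $H_{01}$. After removing only $H_{01}$, the evolution $U_{\rm op}$ you define is still generated by a Hamiltonian containing $H_{-1,1},\,H_{0,2},\,H_{-5,7},\ldots$, i.e.\ by terms that continue to couple the left and right half-rings. Consequently $U_{\rm op}$ does not factor as $U_{\rm L}\otimes U_{\rm R}$ after the second cut, the approximant $\tU$ loses the tensor-product geometry of Fig.~\ref{fig:shift}, and Lemma~\ref{lem:shift_rho} --- the step you claim ``carries over verbatim'' --- cannot even be invoked, since its whole content is the factorization $(U_I\otimes U_0)(U_{\rm L}\otimes U_{\rm R})$.

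The paper's proof supplies exactly the missing ingredient: the couplings crossing the cut are split into $H_{\rm far}$ (those with $\mathsf{d}(x,y)\ge\ell$), which are simply discarded because their total operator norm is $\sum\norm{H_{xy}}\lesssim \ell^{2-\alpha}$ for $\alpha>2$, costing an error $\lesssim T\ell^{2-\alpha}$ controlled by choosing $\ell=(16 c_\alpha T)^{1/(\alpha-2)}$, and $H_{\rm cut}$ (the crossing terms with range $<\ell$), which are supported within distance $\ell$ of the cut so that, after Heisenberg evolution, the $\PP_S$/Duhamel argument you describe does apply at distance $L-\ell\ge 0.9L$ once $\ell\le 0.1 L$. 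This is also where the $L\ge L_0$ hypothesis enters. Your heuristics for the exponents in the three regimes of $\alpha$ are roughly right (the paper bounds the Duhamel integral by $T\max_t g_\alpha(t,0.9L)$, which gives the same extra factor of $T$ you anticipated from integrating a power of $t$), but without the $H_{\rm far}$/$H_{\rm cut}$ decomposition there is no well-defined open-boundary unitary to approximate, and the argument does not get off the ground.
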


\begin{proof}
Following the proof of Theorem \ref{thm:shift}, we construct a circuit approximation $\tU$ still of the form \eqref{eq:tU=}, as shown by the four blue blocks in Fig.~\ref{fig:shift}. \eqref{eq:tU-S>} continues to hold because it only depends on the geometry of the circuit $\tU$. Therefore, due to \eqref{eq:U-S>tU}, we only need to prove $\tU$ is a good approximation for $U$, i.e. \eqref{eq:U-tU}, in time window \eqref{eq:T>power}.

We follow \cite{Tran_2019_polyLC}, which generalizes \cite{HHKL} to ``circuitize'' power-law Hamiltonian dynamics. The complication is that when cutting the chain open between site $0$ and $1$, there are many couplings $H_{xy}$ to deal with instead of just one $H_{01}$ in the local case. We separate the couplings into two groups.

The first group of couplings are highly long-range ones that we can ignore directly.
More precisely, we ignore Hamiltonian terms \begin{equation}\label{eq:Hfar=}
    H_{\rm far} = \sum_{\substack{1\le x \le 2L, 2L<y\le N:\\ \mathsf{d}(x,y)\ge \ell}} H_{xy},
\end{equation}
that couple the left and right half-chain and act on spins of distance $\ge \ell$. The remaining Hamiltonian terms generate $U_{-\rm far}:=\tilde{\mT} \e^{-\ii \int^T_0 \ud t \mlr{H(t)-H_{\rm far}(t) } }$. To choose $\ell$, we bound the error in a similar way as \eqref{eq:UUop-U0<}: \begin{align}\label{eq:Hfar<}
\norm{U- U_{-\rm far}} &\le 
    T\norm{H_{\rm far}}\le T\sum_{\substack{1\le x \le 2L, 2L<y\le N:\\ \mathsf{d}(x,y)\ge \ell}} \norm{H_{xy}} \nonumber\\
    &\le 2T \sum_{x>0, y<0: x-y\ge \ell } (x-y)^{-\alpha} = 2T\sum_{r\ge \ell} r\cdot r^{-\alpha} \le c_\alpha T \ell^{2-\alpha},
\end{align}
for a constant $c_\alpha$ if $\alpha>2$.
To get the second line, we have used \eqref{eq:powerlawdecay} and relaxed the sum to $2$ cuts each for an infinite chain. We therefore choose \begin{equation}
    \ell := (16c_\alpha T)^{1/(\alpha-2)},
\end{equation}
such that \begin{equation}\label{eq:U-Ufar<}
    \norm{U-U_{-\rm far}}\le \frac{1}{16}.
\end{equation}
Moreover, assuming \eqref{eq:T>power}, one can verify that $\ell\lesssim L^{\beta}$ for some $\beta<1$, for all $\alpha$ regimes considered. Therefore, at sufficiently large $L\ge L_0$, \eqref{eq:T>power} implies
\begin{equation}\label{eq:l<L}
    \ell \le 0.1 L.
\end{equation}
In other words, one can always ignore the longest-scale $\sim L$ couplings.

Due to \eqref{eq:U-Ufar<} and triangle inequality, it remains to prove \begin{equation}\label{eq:Ufar-tU<}
    \norm{U_{-\rm far}-\tU}\le \frac{1}{16}.
\end{equation}
Since there are two cuts between left and right, we focus on one cut to show \begin{equation}\label{eq:Ufar-U0<}
    \norm{U_{-\rm far} - U_0 U_{\rm op}} \le \frac{1}{32},
\end{equation}
where $U_{\rm op}$ is generated by open-boundary Hamiltonian $H-H_{\rm far}-H_{\rm cut}$ with
\begin{equation}
    H_{\rm cut} := \sum_{\substack{x<0, y>0:\\ y-x< \ell}} H_{xy}.
\end{equation}
Following the proof of Lemma \ref{lem:HHKL} where $H$ is assumed time-independent for simplicity, \begin{align}\label{eq:Ufar-U0<T}
    \norm{U_{-\rm far} - U_0 U_{\rm op}} &\le T \max_t \norm{H_{\rm cut}(t)-\PP_S H_{\rm cut}(t)} \le T\max_t \sum_{\substack{x<0, y>0:\\ y-x< \ell}} \norm{H_{xy}(t)-\PP_S H_{xy}(t)} \nonumber\\
    &\le T \sum_{\substack{x<0, y>0:\\ y-x< \ell}} (y-x)^{-\alpha} g(T,0.9L)  \le c_\alpha' T g(T,0.9L),
\end{align}
for some constant $c_\alpha'$ because the sum over $x,y$ converges for $\alpha>2$.
Here in the first line, $\PP_S H_{\rm cut}(t)$ is the part of $H_{\rm cut}(t)$ acting inside $S=\{2-L,\cdots, L-1\}$. The second line comes from \eqref{eq:LRB_power}, \eqref{eq:powerlawdecay} and that the distance from any $H_{xy}$ included to the outside of $S$ is larger than $L-\ell\ge 0.9 L$ from \eqref{eq:l<L}. 
We then demand $c'_\alpha T g_\alpha(T,0.9L)\le 1/32$, which for $\alpha\in (2,3]$ becomes \begin{equation}
    T^{1+\frac{1}{\frac{\alpha-1}{\alpha-2}-\frac{\epsilon}{2}}} \le C' L^{\alpha-2-\epsilon},
\end{equation}
for some constant $C'$, which reduces to the third line of \eqref{eq:T>power} for some constant $C$. The first two lines work similarly, because for $3<\alpha$, the error is large so long as $T^3/(L-vT)^{\alpha-1}$ is small.

To summarize, \eqref{eq:T>power} guarantees \eqref{eq:Ufar-U0<}, which implies \eqref{eq:Ufar-tU<} by treating the other cut similarly, and further \eqref{eq:U-tU} by combining \eqref{eq:U-Ufar<}. Finally, this yields \eqref{eq:U-shift>} following the proof of Lemma \ref{lem:HHKL}.
\end{proof}

The bound \eqref{eq:T>power} is probably not qualitatively tight for realizing the shift unitary at $\alpha<4$, since we only compared the unitary action on a \emph{particular} state $\rho_i$.
In the next section, we will present tighter bounds than \eqref{eq:T>power} at sufficiently small $\alpha$. Conceptually, the previous proof argues that entanglement (or, purity) stored in identities cannot be transported easily, and it is conjectured in \cite{Chen_1d21} that entanglement is governed by Frobenius light cone for power law interactions, which cares about \emph{all} states; this conjecture is compatible with \cite{gong2017}.

For the specific task of mapping $\rho_i$ to $\rho_f$, i.e. transferring an identity local density matrix, it can be achieved by state-transfer protocols $U$: \begin{equation}
    U (\alpha \ket{0}_x + \beta \ket{1}_x)\otimes \ket{\bm{0}}_{x+1,\cdots,y} = \ket{\bm{0}}_{x,\cdots,y-1}\otimes (\alpha \ket{0}_y + \beta \ket{1}_y),
\end{equation}
that transfers a qubit information from $x=-L$ to $y=L$ in a background of $0$s.
Such protocols have been constructed \cite{Tran_GHZ21} that asymptotically matches the Lieb-Robinson bound \eqref{eq:LRB_power}; however, these protocols do not saturate \eqref{eq:T>power} because of the extra $T$ factor in \eqref{eq:Ufar-U0<T}. As stated in the introduction, we conjecture that our bounds are not optimal.  

\section{Frobenius perspective}
As we now show, Theorem \ref{thm:T>power} is not optimal at \emph{small} $\alpha$.   To show this result, we will need to take a rather different perspective, based on the Frobenius light cone \cite{Chen_1d21}.  Intuitively, Frobenius bounds study the Frobenius norm, which we define as \begin{equation}
    \lVert A\rVert_{\mathrm{F}} = \sqrt{\frac{\mathrm{tr}(A^\dagger A)}{\mathrm{tr}(1)}}.
\end{equation}
This object may be interpreted more physically as bounding the size of \emph{typical} matrix elements of $A$, rather than the largest matrix element of $A$ possible (the latter is the definition of the operator norm $\lVert A\rVert$). If we can successfully implement $U_{\mathrm{sh}}$, we must implement it in a typical state, so it is natural to expect that the time needed to implement a shift operator can be bounded by studying Frobenius light cones.

\subsection{Operator growth formalism}\label{sec:o_grow}
We now introduce a useful notation for developing these Frobenius bounds.  The key is to view operator growth as a quantum walk in the Hilbert space of operators.
An operator basis for the operator space is Pauli strings, with local basis on each site being the four Pauli matrices denoted by $\{\keto{I}, \keto{X}, \keto{Y}, \keto{Z} \}$. Inner product between operators and the induced Frobenius norm are defined by \begin{equation}\label{eq:innerO}
    \lr{O|O'} := 2^{-N} \tr(O^\dagger O') ,
\end{equation}
where the normalization ensures that each Pauli string has norm $1$.  Notice that $\norm{O}_{\rm F} = \sqrt{\lr{O|O}}$.  For comparison, in this section we denote the usual operator norm as $\norm{O}_\infty \ge \norm{O}_{\rm F}$.

Superoperators are defined as linear maps acting within this operator Hilbert space. We have already encountered the superprojector $\PP$ in \eqref{eq:Pr=}. In addition, we will use the super-identity on a site \begin{equation}
    \mI = \keto{I}\brao{I} + \keto{X}\brao{X}+\keto{Y}\brao{Y}+\keto{Z}\brao{Z},
\end{equation}
which makes $\mI/4$ into a super-density-matrix. The evolution superoperators \begin{equation}
    \mU\keto{O} = \keto{U^\dagger O U}, \quad \mU_{\rm sh}\keto{O} = \keto{U_{\rm sh}^\dagger O U_{\rm sh}},
\end{equation}
are still unitary because $\brao{O}\mU^\dagger \mU\keto{O} = 2^{-N} \tr \mlr{ (U^\dagger O U)^\dagger U^\dagger O U} = \lr{O|O}$. The distance between two superunitaries $\mU,\mathcal{V}$ can be quantified by e.g. $\infty$-norm induced by the operator Frobenius norm: \begin{equation}\label{eq:infF=}
    \norm{\mU-\mathcal{V}}_{\infty, \mathrm{F}} := \max_{O: \norm{O}_{\rm F}=1} \norm{(\mU-\mathcal{V})\keto{O}}_{\rm F}.
\end{equation}
As its dual, the trace distance between two super-density-matrices $\mR,\mR'$ is \begin{equation}
    \norm{\mR-\mR'}_{1, \mathrm{F}}:= \max_{\mathcal{O}: \norm{\mathcal{O}}_{\infty, \mathrm{F}}=1} \mathrm{Tr}[\mathcal{O}(\mR-\mR')],
\end{equation}
where $\mathrm{Tr}$ is the super-trace in operator space.

The evolution \begin{equation}
    \mU=\e^{\int^T_0 \ud t\, \LL(t)}
\end{equation}
is generated by Liouvillian $\LL(t)=\sum_a \LL_a(t)$ where \begin{equation}
    \LL_a(t) \keto{O} := \ii \keto{[H_a(t),O]},
\end{equation}
with $H_a$ being the local terms of $H$.

\subsection{Frobenius bound on realizing shift with long-range interactions}
For power-law interaction \eqref{eq:powerlawdecay}, now we further assume $K=1$ and \begin{equation}\label{eq:trxy=0}
    \tr[H_{xy}(t)]=0,
\end{equation}
which is satisfied easily by subtracting an identity operator for each term.
There is a Frobenius light cone that can be asymptotically slower than Lieb-Robinson (i.e. the light cone for $\infty$-norm bound):

\begin{thm}\cite{Chen_1d21}
Given \eqref{eq:powerlawdecay},
there exists constants $c_{\rm FB}$ determined by $\alpha$, such that for any local operator $A$,
\begin{equation}\label{eq:FrobLC}
    \norm{A(t)-\PP_r A(t)}_{\rm F} \le \norm{A}_F f_\alpha(t,r) 
\end{equation}
where \begin{equation}
    f_\alpha(t,r):=c_{\rm FB}\, t\cdot \left\{\begin{array}{cc}
        \log( r)/ r, & \alpha>2 \\
        \log^2( r)/r, & \alpha=2 \\
        r^{1-\alpha}, & 1<\alpha<2
    \end{array}\right.
\end{equation}
\end{thm}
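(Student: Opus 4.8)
The plan is to track the Heisenberg-evolved operator $A(t)$ through the operator-growth formalism of Section~\ref{sec:o_grow} and bound the Frobenius weight it deposits on Pauli strings reaching beyond distance $r$ from $\mathrm{supp}(A)$. Expanding $A(t)=\sum_S a_S(t)\keto{S}$, $\mU$ being a superunitary makes $\sum_S|a_S(t)|^2=\norm{A}_{\rm F}^2$ conserved for all $t$, and $\norm{A(t)-\PP_r A(t)}_{\rm F}^2$ is exactly the conserved weight carried by strings whose support is not contained in the ball $B_r$ of radius $r$ about $\mathrm{supp}(A)$. Normalizing $\norm{A}_{\rm F}=1$, the goal is to bound this ``leaked weight''.

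First I would derive a differential inequality for $q_r(t):=\norm{(1-\PP_r)A(t)}_{\rm F}$. Splitting $\tfrac{\ud}{\ud t}(1-\PP_r)A(t)=(1-\PP_r)\LL(t)(1-\PP_r)A(t)+(1-\PP_r)\LL(t)\PP_r A(t)$ and using that $\LL(t)$ is anti-Hermitian on the operator Hilbert space --- so that the first term, being $(1-\PP_r)\LL(t)(1-\PP_r)$ acting on $(1-\PP_r)A(t)$, cannot by itself increase the Frobenius norm --- one obtains $\dot q_r(t)\le\norm{(1-\PP_r)\LL(t)\PP_r A(t)}_{\rm F}$. Only the terms $\LL_{xy}$ with $x\in B_r$ and $y\notin B_r$ contribute here, and each can act nontrivially only on the part of $\PP_r A(t)$ whose string already touches $x$; bounding its action by $2\norm{H_{xy}}_\infty$ times the norm $\sqrt{q^{(r)}_x(t)}$ of that part, where $q^{(r)}_x(t)$ is the weight carried by strings in $B_r$ that touch $x$, yields a bound of the schematic form
\begin{equation}
\dot q_r(t)\ \lesssim\ \sum_{x\in B_r}\Big(\sum_{y\notin B_r}\mathsf{d}(x,y)^{-\alpha}\Big)\sqrt{q^{(r)}_x(t)}.
\end{equation}
The square root is the signature of the Frobenius ($\ell^2$) norm: it is the Cauchy--Schwarz gain from weight conservation that is absent in the $\ell^1$/operator-norm estimates behind the Lieb--Robinson kernel $g_\alpha$, and it is what should keep the tail linear in $t$ rather than exponential.

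The hard part is closing this into a self-consistent estimate. The essential input is that $q^{(r)}_x(t)$ is itself small when $x$ sits near the far edge of $B_r$, since weight reaches there only by first leaking through every intermediate shell; the naive bound $q^{(r)}_x(t)\le1$ is hopelessly lossy (for $\alpha<2$ it would give $\dot q_r\lesssim r$, hence a useless $q_r\lesssim rt$). I would encode the recursion by introducing a distance-weighted potential $\Phi(t):=\sum_x w_x\,q_x(t)$ with weights $w_x$ growing with $\mathsf{d}(x,\mathrm{supp}(A))$ at a carefully tuned rate, derive a closed inequality for $\Phi(t)$ using anti-Hermiticity (which annihilates the norm-preserving ``reshuffling'' part of $\LL$) together with conservation of the total weight, and extract the tail of $q_r$ from $\Phi$. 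With $\norm{H_{xy}}_\infty\le\mathsf{d}(x,y)^{-\alpha}$ this reduces everything to a power-law transport estimate on $\bbZ_N$: for $1<\alpha<2$ a single long hop of size $\gtrsim r$ dominates and $\sum_{s>r}s^{-\alpha}\sim r^{1-\alpha}$ gives the third line; for $\alpha>2$ long hops are subleading and the accumulated short-range transport gives an essentially ballistic $\sim t/r$, the logarithm arising from the entropy of the number of hops; $\alpha=2$ is the marginal crossover and acquires a second logarithm. Folding the first-order Duhamel prefactor $t$ and the $\alpha$-dependent sums into $c_{\rm FB}$ produces $f_\alpha(t,r)$.

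I expect this third step --- the self-consistent control of the shell weights $q^{(r)}_x(t)$, i.e.\ exhibiting a Lyapunov weight $w_x$ that diverges fast enough in $x$ to localize the estimate yet slowly enough that $\Phi(t)$ grows at most linearly in $t$ --- to be the main obstacle; it is precisely the mechanism that makes the Frobenius light cone parametrically tighter than the Lieb--Robinson one, and the bookkeeping is delicate, especially at the marginal exponent $\alpha=2$. The remaining ingredients --- the anti-Hermiticity step, the locality of $\LL_{xy}$, and the final power-law sums --- are routine.
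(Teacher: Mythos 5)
This theorem is cited from \cite{Chen_1d21} and is not proven in the present paper, so there is no internal proof here to compare your sketch against. That said, your outline correctly identifies the conceptual ingredients of a Frobenius light cone argument: Heisenberg evolution viewed as a unitary walk on the operator Hilbert space, $\ell^2$ weight conservation, and anti-Hermiticity of $\LL$ combined with Cauchy--Schwarz to isolate the boundary flux. Your first-step differential inequality $\dot q_r(t)\le\norm{(1-\PP_r)\LL(t)\PP_r A(t)}_{\rm F}$ is correctly derived --- the reshuffling piece $(1-\PP_r)\LL(1-\PP_r)$ drops because it is anti-Hermitian on that subspace, hence norm-preserving --- and the refinement that $\LL_{xy}$ with $y\notin B_r$ acts only on the part of $\PP_r A(t)$ already touching $x$, giving a factor $\sqrt{q^{(r)}_x(t)}$ rather than $1$, is precisely the $\ell^2$ gain that makes the Frobenius cone parametrically tighter than the Lieb--Robinson one.

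The genuine gap is the one you flag yourself and do not close: the self-consistent control of the shell weights $q^{(r)}_x(t)$. Saying ``introduce a distance-weighted potential $\Phi(t)=\sum_x w_x q_x(t)$ with carefully tuned $w_x$'' is a placeholder, not an argument --- you never exhibit a weight, never derive a closed inequality for $\dot\Phi$, and never explain how a bound on $\Phi$ feeds back to control the $q^{(r)}_x(t)$ that sit under the integral of $\dot q_r$. As stated, the scheme is circular: $q^{(r)}_x(t)$ is itself a leaked-weight quantity of exactly the kind the light cone theorem must control, so bounding $\dot q_r$ in terms of it presupposes what you want to prove. Breaking that circularity (via an explicit recursive or inductive structure over nested shells) is the entire technical content of \cite{Chen_1d21}, and it is also where the quantitative form of the answer lives: the $\log r$ factor for $\alpha>2$ and the marginal $\log^2 r$ at $\alpha=2$ have to emerge from that iteration, and a heuristic appeal to ``entropy of the number of hops'' does not produce them. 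So while the sketch is a fair account of the physics, the hardest and most characteristic step of the proof is left unresolved, and the proposal cannot be judged a proof.
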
  
Using this result, we can improve Theorem \ref{thm:T>power} to:
\begin{thm}\label{thm:Frob_shift}
For power-law interaction \eqref{eq:powerlawdecay}, if $L\ge L_0$ for some constant $L_0$, and the evolution time $T$ satisfies \begin{equation}\label{eq:T>Frob}
    T\le C_{\rm FB}\cdot \left\{\begin{array}{cc}
       \sqrt{L/\log L}, & \alpha>2 \\
       \sqrt{L/\log^2 L},  & \alpha=2 \\
       L^{(\alpha-1)/2},  & 1<\alpha<2
    \end{array}\right.
\end{equation}
for some constant $C_{\rm FB}$, then \begin{equation}\label{eq:U-shift>F}
    \norm{U-U_{\rm sh}}_{\rm F} \ge 1/8,
\end{equation} so that $U$ does not realize shift operation.
\end{thm}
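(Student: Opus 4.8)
The plan is to mimic the structure of the proof of Theorem~\ref{thm:T>power}, but to replace every operator-norm estimate by a Frobenius-norm estimate, using the Frobenius light cone \eqref{eq:FrobLC} in place of the Lieb-Robinson bound \eqref{eq:LRB_power}, and working throughout in the operator Hilbert space with the superoperator formalism of Section~\ref{sec:o_grow}. First I would build a superoperator circuit approximation $\tmU$ to $\mU$ that has exactly the geometry of Fig.~\ref{fig:shift}: factor $\tmU=(\mU_I\otimes\mU_0)\cdot(\mU_{\rm L}\otimes\mU_{\rm R})$, where each factor is a superunitary supported on the indicated region. The goal is the pair of bounds $\norm{\mU-\tmU}_{\infty,\mathrm F}\le 1/8$ and $\norm{\tmU-\mU_{\rm sh}}_{\infty,\mathrm F}\ge 1/4$, which together with the triangle inequality (for $\norm{\cdot}_{\infty,\mathrm F}$) and the identity $\norm{U-U_{\rm sh}}_{\rm F}=$ (a quantity comparable to) $\norm{\mU-\mU_{\rm sh}}_{\infty,\mathrm F}$ applied to a suitable operator yield \eqref{eq:U-shift>F}; more precisely, one evaluates the superoperator difference on the specific ``super-state'' corresponding to $\rho_i$, exactly as in the proof of Lemma~\ref{lem:shift_rho}.

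For the approximation bound, I would split the couplings across each of the two cuts into a ``far'' group and a ``cut'' group, just as in \eqref{eq:Hfar=}. The far terms are discarded at the cost $T\norm{H_{\rm far}}\le c_\alpha T\ell^{2-\alpha}$ (the operator-norm estimate \eqref{eq:Hfar<} is already good enough here, since Frobenius norm is dominated by operator norm), fixing $\ell\sim T^{1/(\alpha-2)}$ for $\alpha>2$; for $1<\alpha<2$ the far sum does not converge, so instead I would take $\ell\sim L$ itself and absorb the whole cross-cut coupling into the ``cut'' group. The cut terms are then removed by the Duhamel/HHKL argument of Lemma~\ref{lem:HHKL}, but with the error measured in $\norm{\cdot}_{\infty,\mathrm F}$: the crucial input is that $H_{xy}(t)$ evolved under the open-boundary Hamiltonian is, up to Frobenius error $\lesssim f_\alpha(T,0.9L)$, supported inside the region $S$, by \eqref{eq:FrobLC}. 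Summing the convergent series $\sum (y-x)^{-\alpha}$ over cut pairs and multiplying by $T$ gives a total error $\lesssim T f_\alpha(T,0.9L)$, and demanding this be $\le 1/32$ is exactly what produces the three cases of \eqref{eq:T>Frob}: $T\cdot T\log L/L\lesssim 1$ for $\alpha>2$, the same with an extra $\log$ at $\alpha=2$, and $T\cdot T L^{1-\alpha}\lesssim 1$ for $1<\alpha<2$. One subtlety is that the Duhamel identity must be run at the superoperator level; since $\mU$ and the approximating factors are all superunitaries (norm $1$ in $\norm{\cdot}_{\infty,\mathrm F}$), the telescoping argument of \eqref{eq:UUop-U0<} goes through verbatim with $\norm{\cdot}\to\norm{\cdot}_{\infty,\mathrm F}$.

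For the geometric lower bound $\norm{\tmU-\mU_{\rm sh}}_{\infty,\mathrm F}\ge 1/4$, I expect the argument of Section~\ref{sec:shift_rho} to transfer directly: one feeds in the operator $O_i$ that is $\keto{I}$ on sites $L+1,\dots,3L$ and (say) $\keto{Z}$ (or the projector-like super-density-matrix built from $\ket{0}\bra{0}$) on the rest, computes $\tmU\keto{O_i}$ and $\mU_{\rm sh}\keto{O_i}$, partial-traces over region $R$ in operator space, and uses the mismatch in the number of ``$\keto{I}$'' factors in the left half — $L$ versus $L+1$ — together with the fact that $\mU_I$ and $\mU_0$ act inside the $\keto{I}$ region, to get a constant trace-distance gap via the same spectral/eigenvalue-counting argument as \eqref{eq:rho-rho>1/2}. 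The main obstacle, and the place I would be most careful, is bookkeeping the factor-of-two conventions and making sure the Frobenius light cone \eqref{eq:FrobLC} is applied with the right ``initial support'' and the right distance ($L-\ell\ge 0.9L$) at each of the two cuts, and that the extra factor of $T$ in front of $f_\alpha$ — which is what degrades the exponent from the conjectured $L^{\alpha-1}$ to $L^{(\alpha-1)/2}$ — is tracked honestly; everything else is a routine translation of the Lieb-Robinson proof into the operator Hilbert space.
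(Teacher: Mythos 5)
The skeleton of your proposal — keep the circuit geometry of Fig.~\ref{fig:shift}, swap the Lieb--Robinson input for the Frobenius light cone \eqref{eq:FrobLC}, and reduce to the three inequalities ``$U$ close to $U_{-\rm far}$'', ``$U_{-\rm far}$ close to $\tU$'', ``$\tU$ far from $U_{\rm sh}$'' — is exactly right and matches the paper's strategy. But two of the three steps have genuine gaps as you have written them.

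\textbf{The far-coupling bound.} You discard $H_{\rm far}$ using the operator-norm estimate $T\lVert H_{\rm far}\rVert\lesssim T\ell^{2-\alpha}$, claiming this is ``already good enough'' because $\lVert\cdot\rVert_{\rm F}\le\lVert\cdot\rVert$. This is not good enough. For $\alpha>2$ it forces $\ell\sim T^{1/(\alpha-2)}$; with $T\sim\sqrt{L}$ this exceeds $L$ whenever $\alpha\le 5/2$, so the subsequent cut bound (which needs $L-\ell\gtrsim L$) collapses. For $1<\alpha<2$ the operator-norm sum simply diverges, and your fallback of ``take $\ell\sim L$ and absorb everything into the cut group'' does not rescue it: the cut group then contains terms $H_{xy}$ whose support reaches the edge of $S$, so the Frobenius light cone gives no decay (the distance in $f_\alpha(t,\cdot)$ is $L-\ell\approx 0$). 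The missing idea is to estimate $\lVert H_{\rm far}\rVert_{\rm F}$ \emph{directly} in Frobenius norm, using the assumption $\tr[H_{xy}]=0$ so that distinct $H_{xy}$ are orthogonal in the inner product \eqref{eq:innerO}: one then gets $\lVert H_{\rm far}\rVert_{\rm F}\le\bigl(\sum\lVert H_{xy}\rVert_{\rm F}^2\bigr)^{1/2}\lesssim\ell^{1-\alpha}$, which converges for all $\alpha>1$ and yields $\ell\sim T^{1/(\alpha-1)}$. This gain of one full power of $\ell$ over the naive triangle-inequality bound is essential, and it is a Frobenius-specific mechanism you omit.

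\textbf{The lower bound $\lVert\tU-U_{\rm sh}\rVert_{\rm F}\ge 1/4$.} You propose to feed a single operator $O_i$ (a fixed Pauli string with $\keto{I}$ on $\{L+1,\dots,3L\}$ and $\keto{Z}$ elsewhere, or a $\ket{0}\bra{0}$-type projector), partial-trace over $R$ in operator space, and run the eigenvalue-counting argument of Lemma~\ref{lem:shift_rho}. This step would fail. The argument of Lemma~\ref{lem:shift_rho} lives on the spectral mismatch between a \emph{maximally mixed} left-half reduced state with $2^L$ equal eigenvalues and a reduced state of operator norm $\le 2^{-L-1}$; it has no content for a single pure (super-)state, and a fixed Pauli string $\keto{I\cdots I\,Z\cdots Z}$ is exactly such a pure product super-state. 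The ``identities'' in Lemma~\ref{lem:shift_rho} are mixed states $I/2$; their operator-space analogue is the mixed super-density-matrix $\mI/4$, not the pure super-state $\keto{I}$. There are two ways to make the step work, both in the paper. One (Theorem~\ref{thm:Frob_shift}'s actual proof) observes that Lemma~\ref{lem:shift_rho} holds for \emph{every} computational-basis string $z$ in the left half, purifies $\rho_i(z)$ with an ancilla so that the trace-distance bound becomes a fidelity bound $1-|\bra{\Psi_i(z)}U_{\rm sh}^\dagger\tU\ket{\Psi_i(z)}|\ge 1/32$, and then notes that $\lVert\tU-U_{\rm sh}\rVert_{\rm F}^2=2-2^{-4L}[\tr(U_{\rm sh}^\dagger\tU)+\mathrm{c.c.}]$ decomposes into precisely an average of these fidelities over $z$. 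The other (the Super2 route of Theorem~\ref{thm:super2}) feeds the genuinely mixed $\mR_i=\bigotimes\keto{I}\brao{I}\otimes\bigotimes\mI/4$, proves the super-trace-distance bound \eqref{eq:URU-R>} verbatim, and then extracts a single worst-case Pauli string by convexity. Either mechanism requires exploiting a \emph{family} of inputs (all $z$, or the mixture over all Pauli strings $P$); your single-operator input cannot reproduce the $L$-vs-$(L+1)$ counting, which is the crux of the hardness.

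The remaining pieces — running the Duhamel/HHKL argument in Frobenius norm for the cut terms, summing the convergent $(y-x)^{-\alpha}$ series, and the $Tf_\alpha(T,0.9L)\lesssim 1$ condition that produces the three cases of \eqref{eq:T>Frob} — are correct and match the paper.
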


Before the proof, let us compare Theorems \ref{thm:Frob_shift} and \ref{thm:T>power}. The distance metric is stronger here because $\norm{U-U_{\rm sh}}_{\rm F}\le \norm{U-U_{\rm sh}}$; namely, in the time window \eqref{eq:T>Frob}, $U$ does not perform shift for a \emph{typical} state.
Comparing the time window, the previous upper bound \eqref{eq:T>power} is parametrically larger (stronger) for sufficiently large $\alpha$, and vice versa for small $\alpha$. In particular, there is an exponential separation at $\alpha\in (0,1)$, where $T$ is algebraic with $L$ in \eqref{eq:T>Frob}, while \eqref{eq:T>power} would become $T=\mathrm{O}(\mathrm{polylog}(L))$ if using the Lieb-Robinson bound $g_\alpha(t,r)\sim \e^{\mu t}r^{-\alpha}$ \cite{Hastings_koma} in this $\alpha$ regime. The critical $\alpha_{\rm c}\in (2,3)$ where \eqref{eq:T>Frob} and \eqref{eq:T>power} coincide is \begin{equation}
    \frac{[\alpha-1-\epsilon(\alpha/2-1)](\alpha-2-\epsilon)}{2\alpha-3-\epsilon(\alpha/2-1)} = \frac{1}{2}-\epsilon', \rarrow \alpha\approx \alpha_{\rm c} = 2+1/\sqrt{2},
\end{equation}
where we have set $\epsilon,\epsilon'\rightarrow 0$. The Frobenius bound Theorem \ref{thm:Frob_shift} is then stronger for $\alpha<\alpha_{\rm c}$; for $\alpha>\alpha_{\rm c}$, the two Theorems are in general not comparable because the metric is different.

\begin{proof}
The proof mimics that of Theorem \ref{thm:shift} and \ref{thm:T>power}, where we basically replace all $\norm{V-V'}$ with Frobenius norm $\norm{V-V'}_{\rm F}$. This is valid because one still has triangle inequality for Frobenius norm \begin{equation}
    \norm{A-B}_{\rm F}\le \norm{A}_{\rm F} + \norm{B}_{\rm F}.
\end{equation}
As one can verify, we only need to prove the following three inequalities. \begin{subequations}\label{eq:three}
    \begin{align}
        \norm{U-U_{-\rm far}}_{\rm F} &\le 1/16, \label{eq:U-Ufar<F} \\
        \norm{U_{-\rm far} - \tU}_{\rm F} &\le 1/16, \label{eq:Ufar-tU<F} \\
        \norm{\tU-U_{\rm sh}}_{\rm F} &\ge 1/4. \label{eq:tU-S>F}
    \end{align}
\end{subequations}
Here $U_{-\rm far}$ is generated by $H-H_{\rm far}$ with $H_{\rm far}$ defined in \eqref{eq:Hfar=} where the parameter $\ell$ is determined shortly. 

We proceed one by one. For \eqref{eq:U-Ufar<F}, similar to \eqref{eq:Hfar<} we have
\begin{align}\label{eq:Hfar<F}
\norm{U- U_{-\rm far}}_{\rm F} &\le 
    T\norm{H_{\rm far}}_{\rm F} \nonumber\\
    &\le T \sqrt{\sum_{\substack{1\le x \le 2L, 2L<y\le N:\\ \mathsf{d}(x,y)\ge \ell}} \norm{H_{xy}}_{\rm F}^2 } \nonumber\\
    &\le 2T \sqrt{\sum_{x>0, y<0: x-y\ge \ell } (x-y)^{-2\alpha}} = 2T\sqrt{\sum_{r\ge \ell} r\cdot r^{-2\alpha} }\le \tilde{c}_\alpha T \ell^{1-\alpha},
\end{align}
for some constant $\tilde{c}_\alpha$ if $\alpha>1$. Here the second line comes from orthogonality $\lr{H_{xy}|H_{x'y'}}=0$ if $(x,y)\neq (x',y')$ due to \eqref{eq:trxy=0}. To get the third line, we used $\norm{H_{xy}}_{\rm F}\le \norm{H_{xy}}$. We therefore choose \begin{equation}\label{eq:l=Talpha}
    \ell := (16c_\alpha T)^{1/(\alpha-1)},
\end{equation}
such that \eqref{eq:U-Ufar<F} holds. Moreover, one can verify that \eqref{eq:T>Frob} implies $\ell\le 0.1L$ \eqref{eq:l<L} for sufficiently large $L\ge L_0$.

\eqref{eq:Ufar-tU<F} just comes from \eqref{eq:Ufar-U0<T} with $\infty$-norms changed to Frobenius norm, $g_\alpha(t,r)$ substituted by $f_\alpha(t,r)$, and direct computation of $Tf_\alpha(T,0.9L)=\mathrm{O}(1)$.

For \eqref{eq:tU-S>F}, observe that in the proof of Lemma \ref{lem:shift_rho}, we only demand the $S:=\{1-L,\cdots,L\}$ subsystem in $\rho_i$ to be a pure direct product state (see arguments around \eqref{eq:jbasis} and \eqref{eq:norm<2L}). Therefore Lemma \ref{lem:shift_rho} actually proves that \begin{equation}\label{eq:UrhozU-rho>}
    \norm{\tU \rho_i(z) \tU^\dagger - \rho_f(z)}_1 \ge \frac{1}{2},
\end{equation}
for any string $z\in \{0,1\}^{2L}$, where \begin{equation}
    \rho_i(z) := \bigotimes_{x=1-L}^L \ket{z_x}_x\bra{z_x} \otimes \bigotimes_{x=L+1}^{3L} \frac{I_x}{2},
\end{equation}
(so that the previous $\rho_i=\rho_i(0\cdots0)$), and $\rho_f(z):=U_{\rm sh}\rho_i(z) U_{\rm sh}^\dagger$. We need to relate \eqref{eq:UrhozU-rho>} to \eqref{eq:tU-S>F}.

For each $z$ string, one can purify the system by a $2^{2L}$-dimensional ancilla (labeled by $\rm A$) so that \begin{equation}
    \rho_{i}(z) = \tr_{\rm A}\ket{\Psi_{i}(z)} \bra{\Psi_{i}(z)}.
\end{equation}
\eqref{eq:UrhozU-rho>} then bounds the fidelity between pure states $\tU \ket{\Psi_i(z)}$ and $\ket{\Psi_f(z)}:=U_{\rm sh}\ket{\Psi_i(z)}$ (here $\tU$, for example, is shorthand notation for $\tU\otimes I_{\rm A}$ acting on the enlarged Hilbert space) \begin{align}
    2\lr{1-\abs{\bra{\Psi_i(z)}U_{\rm sh}^\dagger \tU \ket{\Psi_i(z)}}} &\ge 1-\abs{\bra{\Psi_i(z)}U_{\rm sh}^\dagger \tU \ket{\Psi_i(z)}}^2 \nonumber\\
    &= \frac{1}{4} \norm{\tU \ket{\Psi_i(z)} \bra{\Psi_i(z)}\tU^\dagger - U_{\rm sh} \ket{\Psi_i(z)} \bra{\Psi_i(z)}U_{\rm sh}^\dagger}_1^2 \nonumber\\
    &\ge \frac{1}{4}\norm{\tU \rho_i(z) \tU^\dagger - \rho_f(z)}_1^2 \ge \frac{1}{16}. \label{eq:321}
\end{align}
Here the first line comes from $1+\abs{\bra{\Psi_i(z)}U_{\rm sh}^\dagger \tU \ket{\Psi_i(z)}}\le 2$, the second line comes from the relation between fidelity and trace-norm for pure states, and the last line comes from monotonicity under partial trace over ancilla, together with \eqref{eq:UrhozU-rho>}.  

On the other hand, the target \begin{align}\label{eq:F2>fidelity}
    \norm{\tU-U_{\rm sh}}_{\rm F}^2 &= \norm{\tU}_{\rm F}^2 + \norm{U_{\rm sh}}_{\rm F}^2 - \mlr{(U_{\rm sh}| \tU) +{\rm c.c.}} = 2 - 2^{-4L} \mlr{\tr\lr{U_{\rm sh}^\dagger \tU} + {\rm c.c.}} \nonumber\\
    &= 2 - 2^{-4L} \mlr{\sum_z \tr\lr{(I_{S^{\rm c}}\otimes \ket{z}_S\bra{z}) U_{\rm sh}^\dagger \tU} + {\rm c.c.}} \nonumber\\
    &\ge 2\cdot 2^{-2L}\sum_z 1- \abs{2^{-2L}\tr\lr{(I_{S^{\rm c}}\otimes \ket{z}_S\bra{z}) U_{\rm sh}^\dagger \tU}} \nonumber\\
    &= 2\cdot 2^{-2L}\sum_z 1-\abs{\bra{\Psi_i(z)}U_{\rm sh}^\dagger \tU \ket{\Psi_i(z)}} \ge 2\cdot \frac{1}{32} = \frac{1}{16}.
\end{align}
Here all traces work in the physical system without the ancilla. We have used $\norm{V}_{\rm F}=1$ for any unitary $V$ in the first line of \eqref{eq:F2>fidelity}, and inserted $I=I_{S^{\rm c}}\otimes \sum_z\ket{z}_S\bra{z}$ in the second line, where $S^{\rm c}=\{L+1,\cdots, 3L\}$ is the complement of $S$. To get the last line, observe that when computing the fidelity in the enlarged Hilbert space, one can first trace over ancilla because $U_{\rm sh}^\dagger \tU$ does not act on it, which leaves density matrix $2^{-2L}I_{S^{\rm c}}$ in $S^{\rm c}$ that exactly matches the third line. \eqref{eq:F2>fidelity} is just \eqref{eq:tU-S>F} by taking square root; the inequality is simply (\ref{eq:321}).

To summarize, we have proven all inequalities in \eqref{eq:three}, and plugging them into the proof of Theorem \ref{thm:T>power} establishes \eqref{eq:U-shift>F} by considering Frobenius norms throughout. 
\end{proof}

\subsection{The Super2 formalism: an alternate proof}
Here we present an alternative proof that shift cannot be realized in time \eqref{eq:T>power}. The idea is to more seriously promote from the state Hilbert space to the operator space, as described in Section \ref{sec:o_grow}. Note that similar ideas also appear in bounding Frobenius light cones \cite{Chen_1d21}, and quantifying information scrambling \cite{Zanardi_scramble22,Zanardi23}. For example, the super-density-matrices defined later in \eqref{eq:Ri} correspond to subalgebras to be scrambled in \cite{Zanardi_scramble22}.

\begin{thm}\label{thm:super2}
Given the same conditions in Theorem \ref{thm:Frob_shift}, the superunitaries are far: There exists an operator $O$ with \begin{equation}\label{eq:Oinf=1}
    \norm{O}_\infty=1,
\end{equation}
such that \begin{equation}\label{eq:mU-shift>F}    
\norm{\lr{\mU-\mU_{\rm sh}}\keto{O}}_{ {\rm F}} \ge \frac{1}{8}.
\end{equation} 
Therefore, $U$ does not realize the shift operation.
\end{thm}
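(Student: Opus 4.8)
The plan is to mirror the architecture of the proof of Theorem~\ref{thm:Frob_shift}, but to carry the entire argument at the level of superoperators acting on operator space rather than of unitaries acting on states. Concretely, I would again build the circuit approximation $\tmU$ to $\mU$ of the same four-block form as in \eqref{eq:tU=}, lifted to the operator Hilbert space, so that $\tmU=\lr{\mU_I\otimes \mU_0}\cdot\lr{\mU_{\rm L}\otimes \mU_{\rm R}}$. The factorization $\norm{\mU-\tmU}_{\infty,\mathrm{F}}\le 1/8$ follows verbatim from the HHKL-style and long-range-truncation estimates already established — namely \eqref{eq:U-Ufar<F}, \eqref{eq:Ufar-tU<F} — since those bounds were proven using the Frobenius norm and the superoperator $\PP_r$, and $\norm{\cdot}_{\infty,\mathrm{F}}$ is precisely the induced norm controlling how far a superunitary moves a unit-Frobenius-norm operator. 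So it remains to exhibit the promised operator $O$ with $\norm{O}_\infty=1$ for which $\norm{(\tmU-\mU_{\rm sh})\keto{O}}_{\rm F}\ge 1/4$, and then conclude by the superoperator triangle inequality exactly as in \eqref{eq:U-S>tU}.

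The key new step is to choose $O$ as the super-analogue of the pure product state $\rho_i$ used before. Recall that in Lemma~\ref{lem:shift_rho} the only feature of $\rho_i$ that mattered was that it is a pure product state on $S=\{1-L,\dots,L\}$ with maximally mixed marginal on $S^{\rm c}$. The operator-space counterpart is to pick $O$ to be a \emph{Pauli string} supported on $S$ and identity on $S^{\rm c}$ — say $O=\bigotimes_{x\in S} Z_x$, or more conveniently an arbitrary fixed $z\in\{0,1,2,3\}^{2L}$ string $O=\bigotimes_{x\in S}\sigma^{z_x}_x$. Then $\keto{O}$ has $\norm{O}_{\rm F}=1$ and $\norm{O}_\infty=1$; in operator space it behaves like a pure product ``super-state'' on $S$, which is exactly the structure that drove \eqref{eq:jbasis}. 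Applying $\mU_{\rm sh}$ shifts $O$ to a Pauli string supported on $\{1-L+1,\dots,L+1\}=\{2-L,\dots,L+1\}$, i.e. the identity region of $\mU_0$ and $\mU_I$ moves by one site, precisely paralleling the $\rho_i\mapsto\rho_f$ analysis. I would then run the super-density-matrix version of \eqref{eq:1/2<}--\eqref{eq:rho-rho>1/2}: write $\mR_i=\keto{O}\brao{O}$ (a rank-one super-density-matrix), $\mR_f=\mU_{\rm sh}\mR_i\mU_{\rm sh}^\dagger$, trace out the super-right-half in operator space, and observe that $\mU_{\rm L}$ leaves $\tr_{\rm R}^{\rm(super)}\mR_i$ with $4^L$ equal ``super-Schmidt'' weights $4^{-L}$ (the analogue of \eqref{eq:jbasis}), whereas $\tr_{\rm R}^{\rm(super)}(\mU_0^\dagger \mR_f\mU_0)$ has super-operator norm $\le 4^{-L-1}$ by the same purity-plus-monotonicity argument as \eqref{eq:norm<2L}. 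The spectral mismatch then yields $\norm{\tmU\mR_i\tmU^\dagger-\mR_f}_{1,\mathrm{F}}\ge 1/2$, which bounds $1-\abs{\brao{O}\mU_{\rm sh}^\dagger\tmU\keto{O}}$ from below, and hence $\norm{(\tmU-\mU_{\rm sh})\keto{O}}_{\rm F}\ge 1/4$ by the fidelity/trace-distance inequality used in \eqref{eq:321}.

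The main obstacle — and the only genuinely new bookkeeping — is checking that the operator-space tensor-product structure really does factor the way the argument needs: that ``operator on $S$ tensor identity on $S^{\rm c}$'' is a clean subspace of operator Hilbert space with a well-defined partial super-trace, that $\mU_{\rm L}\otimes\mU_{\rm R}$ and $\mU_I\otimes\mU_0$ respect this bipartition of operator space in the same way $U_{\rm L}\otimes U_{\rm R}$ respected the state bipartition, and that a single Pauli string on $S$ really is the ``maximally pure, product'' super-state (it is: its reduced super-density matrix on $S$ is rank one). Once that is in place, every inequality is a transcription of the proof of Theorem~\ref{thm:Frob_shift} with states replaced by operators and $\norm{\cdot}_1,\norm{\cdot}$ replaced by $\norm{\cdot}_{1,\mathrm{F}},\norm{\cdot}_{\infty,\mathrm{F}}$; the time window \eqref{eq:T>Frob} is inherited unchanged because it came entirely from the Frobenius light cone $f_\alpha(t,r)$ and the truncation scale \eqref{eq:l=Talpha}, neither of which is affected by the lift. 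I would also remark that this ``Super2'' proof is strictly stronger in flavor than Theorem~\ref{thm:Frob_shift}: it says $U$ fails to shift even a single typical \emph{operator}, exhibiting an explicit witness $O$, which is the statement that most directly connects to the GNVW-index obstruction and to the SPT-preparation analogy advertised in the introduction.
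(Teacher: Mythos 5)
There is a genuine gap, and it is in the construction of the witness $O$.

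You propose taking $O$ to be a single \emph{fixed} Pauli string supported on $S=\{1-L,\dots,L\}$ (identity on $S^{\rm c}$), and then running the Lemma~\ref{lem:shift_rho} argument on the rank-one super-density-matrix $\mR_i = \keto{O}\brao{O}$. But $\keto{O}\brao{O}$ is a \emph{pure product super-state on every site}: it equals $\keto{\sigma^{z_x}}\brao{\sigma^{z_x}}$ on $S$ and $\keto{I}\brao{I}$ on $S^{\rm c}$, with no $\mI/4$ factor anywhere. Tracing out the super-right-half therefore leaves a \emph{rank-one} super-density matrix, not one with $4^L$ equal ``super-Schmidt'' weights $4^{-L}$. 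The $2^L$ flat eigenvalues in \eqref{eq:jbasis} were not ``driven by the pure product structure on $S$''; they came from the $L$ maximally-mixed ($I/2$) sites of $\rho_i$ that sit in the left half. Your $O$ has no analogue of that mixed region, so the spectral-mismatch estimate \eqref{eq:norm<2L}--\eqref{eq:rho-rho>1/2} simply does not apply and the claimed bound $\norm{\tmU\mR_i\tmU^\dagger - \mR_f}_{1,\mathrm{F}}\ge 1/2$ is unsupported. You also have the support of the Pauli backwards: the witness should be identity on $S$ (the pure region of $\rho_i$) and a nontrivial Pauli on $S^{\rm c}$ (the mixed region), not the other way around.

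The paper avoids this by working with the genuinely mixed super-density-matrix $\mR_i$ of \eqref{eq:Ri}, which does carry $\mI_x/4$ on $S^{\rm c}$, so Lemma~\ref{lem:shift_rho} transfers verbatim to give \eqref{eq:URU-R>}. The passage from this mixed-$\mR_i$ bound to a single Pauli witness is a separate, non-trivial step: one expands $\mR_i = 4^{-2L}\sum_P \keto{P}\brao{P}$ over Pauli strings $P$ that are identity on $S$ and arbitrary on $S^{\rm c}$, uses the triangle inequality and the rank-one estimate $\norm{\keto{O_1}\brao{O_1}-\keto{O_2}\brao{O_2}}_{1,\mathrm{F}}\le 2\norm{\keto{O_1}-\keto{O_2}}_{\mathrm{F}}$ term-by-term, and then takes the \emph{maximizing} $P$ (equation \eqref{eq:<maxP}). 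So the existence of the witness comes from an averaging argument; it is not any one fixed string chosen in advance. Finally, a smaller inaccuracy: the circuitization bounds do not carry over ``verbatim.'' One needs \eqref{eq:mU<U2}, namely $\norm{(\mU-\mU')\keto{O}}_{\mathrm{F}}\le 2\norm{U-U'}_{\mathrm{F}}$, which costs a factor of 2 and is why the paper tightens the constants in \eqref{eq:l=Talpha} and \eqref{eq:Ufar-tU<F} from $1/16$ to $1/32$.
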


Theorem \ref{thm:super2} implies \ref{thm:Frob_shift} with a weaker constant $1/8\rightarrow1/16$ in \eqref{eq:U-shift>F}, because \begin{align}\label{eq:mU<U}
    \norm{\lr{\mU-\mU_{\rm sh}}\keto{O}}_{ {\rm F}} &= \norm{U^\dagger O U - U_{\rm sh}^\dagger O U_{\rm sh}}_{\rm F} \le \norm{U^\dagger O \lr{U- U_{\rm sh}}}_{\rm F} + \norm{\lr{U^\dagger- U_{\rm sh}^\dagger} O U_{\rm sh}}_{\rm F} \nonumber\\
    &\le \norm{U^\dagger}_\infty \norm{O}_\infty \norm{U- U_{\rm sh}}_{\rm F} + \norm{U^\dagger- U_{\rm sh}^\dagger}_{\rm F} \norm{O}_\infty \norm{U_{\rm sh}}_\infty = 2 \norm{U- U_{\rm sh}}_{\rm F},
\end{align}
where we have used triangle inequality in the first line, and H\"{o}lder's inequality $\norm{AB}_{\rm F}\le \norm{A}_\infty \norm{B}_{\rm F}$ in the second line.
\eqref{eq:mU-shift>F} also implies the more compact version $\norm{\mU-\mU_{\rm sh}}_{\infty, {\rm F}} \ge 1/8$ from $\norm{O}_{\rm F}\le \norm{O}_\infty$ and the definition \eqref{eq:infF=}.

\begin{proof}
The idea is to promote the proof of Theorem \ref{thm:shift} from state (density matrix) space to operator (super-density-matrix) space, since the shift $U_{\rm sh}$ is also the super-shift $\mU_{\rm sh}$. We call this approach the Super2 picture. The benefit is that some steps in the proof of Theorem \ref{thm:shift} (e.g. Lemma \ref{lem:shift_rho}) generalizes trivially after promotion, while now we can use the slower Frobenius light cone \eqref{eq:FrobLC}  as in Theorem \ref{thm:Frob_shift}. 

Since triangle inequality still holds for the function $\norm{\cdot\keto{O}}_{\mathrm{F}}$ of superoperators, to prove \eqref{eq:mU-shift>F} it suffices to show the analog of \eqref{eq:three}: \begin{subequations}\label{eq:three'}
    \begin{align}
        \norm{\lr{\mU-\mU_{-\rm far}} \keto{O}}_{\rm F} &\le 1/16, \label{eq:U-Ufar<F1} \\
        \norm{\lr{\mU_{-\rm far} - \tmU} \keto{O}}_{\rm F} &\le 1/16, \label{eq:Ufar-tU<F1} \\
        \norm{\lr{\tmU-\mU_{\rm sh}} \keto{O}}_{\rm F} &\ge 1/4, \label{eq:tU-S>F1}
    \end{align}
\end{subequations}
for the to-be-determined operator $O$.
Here $\mU_{-\rm far}, \tmU$ are the superoperator versions of $U_{-\rm far}, \tU$ constructed in the proof of Theorem \ref{thm:Frob_shift}. In particular, $\tmU$ has the circuit geometry in Fig.~\ref{fig:shift}. 

We first prove \eqref{eq:U-Ufar<F1} and \eqref{eq:Ufar-tU<F1} for any $O$ normalized by \eqref{eq:Oinf=1}. Similar to \eqref{eq:mU<U}, we have \begin{equation}\label{eq:mU<U2}
    \norm{\lr{\mU- \mU_{-\rm far}} \keto{O}}_{\rm F} \le 2 \norm{U-U_{-\rm far}}_{\rm F}.
\end{equation}
Following the proof of \eqref{eq:U-Ufar<F}, one has $\norm{U-U_{-\rm far}}_{\rm F}\le 1/32$ with a slightly different constant $1/16\rightarrow 1/32$ in \eqref{eq:l=Talpha}, thus \eqref{eq:mU<U2} implies \eqref{eq:U-Ufar<F1}. \eqref{eq:Ufar-tU<F1} follows similarly from adjusting the constant $1/16\rightarrow 1/32$ in \eqref{eq:Ufar-tU<F}.

It remains to find an operator $O$ with \eqref{eq:Oinf=1} that satisfies \eqref{eq:tU-S>F1}. 
Analogous to \eqref{eq:rhoi} and \eqref{eq:rhof}, we introduce super-density-matrices \begin{equation}\label{eq:Ri}
    \mR_i := \bigotimes_{x=1-L}^L \keto{I}_x\brao{I} \otimes \bigotimes_{x=L+1}^{3L} \frac{\mI_x}{4},
\end{equation}
so that \begin{equation}\label{eq:Rf}
    \mR_f:= \mU_{\rm sh}\mR_i \mU_{\rm sh}^\dagger = \bigotimes_{x=-L}^{L-1}  \keto{I}_x\brao{I} \otimes \bigotimes_{x=L}^{3L-1} \frac{\mI_x}{4}.
\end{equation}
The super-identity region is frozen in dynamics, as we used previously; now the other $\keto{I}$ region is also frozen from \eqref{eq:LI=0}. Nevertheless, we do not need this extra observation: Lemma \ref{lem:shift_rho} generalizes verbatim to \begin{equation}\label{eq:URU-R>}
    \norm{\tmU \mR_i \tmU^\dagger - \mR_f}_{1,\rm F} \ge 1/2,
\end{equation}
since the operator space is just isomorphic to the state space of qudits.

Moreover, the super-density-matrix has a nice property $\mR_i=4^{-2L}\sum_P \keto{P}\brao{P}$ expanded as $4^{2L}$ Pauli strings $P$, so that \eqref{eq:URU-R>} becomes \begin{align}\label{eq:<maxP}
    1/2 &\le 4^{-2L} \norm{\sum_P \tmU \keto{P}\brao{P} \tmU^\dagger - \mU_{\rm sh} \keto{P}\brao{P} \mU_{\rm sh}^\dagger}_{1, \rm F} \le 4^{-2L} \sum_P \norm{\tmU \keto{P}\brao{P} \tmU^\dagger - \mU_{\rm sh} \keto{P}\brao{P} \mU_{\rm sh}^\dagger}_{1, \rm F} \nonumber\\
    &\le 4^{-2L} \sum_P 2 \norm{(\tmU-\mU_{\rm sh})\keto{P}}_{\rm F} \le 2\max_P \norm{(\tmU-\mU_{\rm sh})\keto{P}}_{\rm F} \nonumber\\
    &:= 2\norm{(\tmU-\mU_{\rm sh})\keto{O}}_{\rm F}.
\end{align}
Here the first line follows from triangle inequality. To get the second line of \eqref{eq:<maxP}, we have used \begin{align}
    \norm{\keto{O_1}\brao{O_1}-\keto{O_2}\brao{O_2}}_{1, \mathrm{F}} &\le \norm{\mlr{\keto{O_1}-\keto{O_2}}\brao{O_1}}_{1, \mathrm{F}} + \norm{\keto{O_2}\mlr{\brao{O_1}-\brao{O_2}}}_{1, \mathrm{F}} \nonumber\\
    &= \norm{\keto{O_1}-\keto{O_2}}_{\rm F} \norm{O_1}_{\mathrm{F}} + \norm{O_2}_{\mathrm{F}} \norm{\keto{O_1}-\keto{O_2}}_{\rm F} \le 2\norm{\keto{O_1}-\keto{O_2}}_{\rm F},
\end{align}
where $\keto{O_1}=\tmU\keto{P}$, $\keto{O_2}=\mU_{\rm sh}\keto{P}$, the first line uses triangle inequality, and the second line uses the Super2 analog of $\norm{\ket{\psi}\bra{\psi'}}_1 = \norm{\ket{\psi}} \norm{\ket{\psi'}}$ together with $\norm{O_1}_{\rm F}=\norm{P}_{\rm F}=1$. Therefore, in the last line of \eqref{eq:<maxP}, the desired $O$ is chosen as the maximum Pauli string $P$ achieving the second line, which satisfies \eqref{eq:Oinf=1} and \eqref{eq:tU-S>F1}. 

This finishes the proof because we have established \eqref{eq:three'}. Note that the difference to the proof of Theorem \ref{thm:Frob_shift}, in addition to adjusting constants, is merely in showing \eqref{eq:tU-S>F} versus \eqref{eq:tU-S>F1}.
\comment{

We move on to show \eqref{eq:U-Ufar<F1}. Similar to \eqref{eq:Hfar<}, we have
\begin{align}\label{eq:Hfar<F1}
\norm{\lr{\mU- \mU_{-\rm far}} \keto{O}}_{\rm F} &\le 
    T\norm{\LL_{\rm far} \keto{O}}_{\rm F} \nonumber\\
    &\le T \sqrt{\sum_{\substack{1\le x \le 2L, 2L<y\le N:\\ \mathsf{d}(x,y)\ge \ell}} \norm{H_{xy}}_{\rm F}^2 } \nonumber\\
    &\le 2T \sqrt{\sum_{x>0, y<0: x-y\ge \ell } (x-y)^{-2\alpha}} = 2T\sqrt{\sum_{r\ge \ell} r\cdot r^{-2\alpha} }\le \tilde{c}_\alpha T \ell^{1-\alpha},
\end{align}

We first beat the naive counting \eqref{eq:Hignore<} for Hamiltonian terms acting across the cut (say, between $x=0$ and $x=1$) that has range $\gtrsim n$. The super-density-matrix under time evolution $\mR_i(s)$ can be expanded as \begin{equation}
    \mR_i(s) = 4^{-2L} \sum_P \keto{P(s)} \brao{P(s)},
\end{equation}
where $P$ is one of $4^{2L}$ Pauli strings supported in $L+1,\cdots,3L$. Using Duhamel identity, $\keto{P(s)} = \keto{P(s)_{\rm ignore}} + \keto{\delta P(s)}$, where ``ignore'' means evolving without terms in $H_{\rm ignore}$. If for each $P$, we have \begin{equation}\label{eq:dP<e}
    \norm{\keto{\delta P(s)} }_{\rm F} \le \Delta \ll 1,
\end{equation}
then $\mR_i(s)$ is also close to the ignored dynamics: \begin{align}
    \norm{ \mR_i(s) - \mR_i(s)_{\rm ignore} }_{1,\rm F} &\le 4^{-2L} \sum_P \norm{ \keto{P(s)} \brao{P(s)} - \keto{P(s)_{\rm ignore}} \brao{P(s)_{\rm ignore}} }_{1,\rm F} \nonumber\\
    &\le 4^{-2L} \sum_P 3 \norm{ \keto{\delta P(s)} \brao{P(s)} }_{1,\rm F}  \le 4^{-2L} \sum_P 3 \norm{ \keto{\delta P(s)} }_{\rm F} \norm{ \keto{P(s)} }_{\rm F} \le 3\Delta.
\end{align}
Thus it suffices to show \eqref{eq:dP<e}. By Duhamel identity, \begin{align}\label{eq:dP<n1-a}
    \norm{\keto{\delta P(s)} }_{\rm F} &\le t \max_{s\le t} \norm{ \LL_{\rm ignore} \keto{P(s)} }_{\rm F} = t \max_{s\le t} \norm{ \mlr{ H_{\rm ignore}, P(s) } }_{\rm F} \le 2t \max_{s\le t} \norm{ H_{\rm ignore}}_{\rm F} \norm{P(s) }_{\infty} \nonumber\\
    &\le 2t \max_{s\le t} \sqrt{ \sum_{x<0,y>0: |x-y|>n} |x-y|^{-2\alpha} } \lesssim t L^{1-\alpha} \ll 1, 
\end{align}
for $\alpha>2$ and $t\le n$. Here in the first line, we have used H\"{o}lder's inequality \begin{equation}\label{eq:holder}
    \norm{HP}_{\rm F}\le \norm{H}_{\rm F} \norm{P}_\infty,
\end{equation} 
where $\norm{P}_\infty=1$ for (evolved) Pauli string. In the second line, we have used $\keto{H_{xy}}$ is orthogonal in inner product \eqref{eq:innerO} for different $x,y$. We see Frobenius norm indeed gain us an extra factor $1/n$ comparing to \eqref{eq:Hignore<}. It is then tempting to generalize Lemma \ref{lem:HHKL} to the following.
\end{proof}

\begin{lem}
There exists a super-unitary $\mU_0$ acting on region $S_L:=\{1-L,\cdots,L\}$ with \begin{equation}
    L = 3L/4+v_{\rm F} t,
\end{equation}
where $v_{\rm F}$ is the finite Frobenius velocity, 
such that \begin{equation}\label{eq:cU=Uop}
    \norm{\lr{ \mU-\mU_0 \cdot \mU_{\rm op} } \keto{P} }_{\rm F} \le \epsilon/4,
\end{equation}
for any operator $P$ with $\norm{P}_\infty=1$,
where $\mU_{\rm op}O = U_{\rm op}O U_{\rm op}^\dagger$ with \begin{equation}
    U_{\rm op} = \e^{-\ii t (H-H_{\rm C})},
\end{equation}
which is generated by $H$ with open boundary (i.e. deleting all terms in $H_{\rm C}$ acting across the cut).
\end{lem}

\begin{proof}
In \eqref{eq:dP<n1-a} we have shown the farthest couplings $H_{\rm ignore}$ in $H_{\rm C}$ can be safely ignored, so we assume $H_{\rm C}$ acts within $x=-L/2,\cdots,n/2$ from now on. In other words, the farthest coupling in $H_{\rm C}$ couples $-L/2$ and $n/2$, which we assume to be integers for simplicity.

We follow closely the proof of Lemma \ref{lem:HHKL}. In interaction picture, \begin{equation}
    \mU \mU_{\rm op}^\dagger = \e^{t \LL} \e^{-t (\LL-\LL_{\rm C})} = \tilde{\mT} \e^{\int^t_0 \LL_{\rm C}(s) \ud s}, \quad \mathrm{where} \quad \LL_{\rm C}(s):= \e^{ s (\LL-\LL_{\rm C})} \LL_{\rm C} \e^{-s(\LL-\LL_{\rm C})} = \ii \mlr{ H_{\rm C}(s), \cdot },
\end{equation}
where \begin{equation}
    H_{\rm C}(s):= \e^{ s (\LL-\LL_{\rm C})} H_{\rm C} =\sum_{-L/2\le x\le 0, 1\le y\le n/2} H_{xy}(s)
\end{equation} 
obeys Frobenius bound \begin{equation}\label{eq:H-H'<Frob}
    \norm{H_{xy}(s) - H_{xy}'(s)}_{\rm F} \le c_{\rm F} \norm{H_{xy}}_{\rm F} s^{\beta} L^{-\gamma},
\end{equation}
for some constant exponents $\beta,\gamma\ge 0$,
where $H_{xy}'(s)$ acts inside $S_L$. Define 
\begin{equation}
    U_0 = \tilde{\mT} \e^{ \int^t_0 \LL_{\rm C}'(s) \ud s},
\end{equation}
supported in $S_L$, where $\LL_{\rm C}'(s) = \ii [\sum_{xy}H_{xy}'(s),\cdot]$. By Duhamel identity, one can use \eqref{eq:holder} to get \begin{align}\label{eq:mUUop-U0<}
    \norm{\lr{\mU \mU_{\rm op}^\dagger - \mU_0} \keto{P}}_{\rm F} &= \norm{ \int^t_0 \ud s \tilde{\mT} \e^{\int^s_0 \LL_{\rm C}(s_1) \ud s_1} \mlr{\LL_{\rm C}(s) - \LL_{\rm C}'(s)} \e^{ \int^t_s \LL_{\rm C}'(s_2) \ud s_2} \keto{P} }_{\rm F} \nonumber\\
    &\le \sum_{-L/2\le x\le 0, 1\le y\le n/2} \norm{ \int^t_0 \ud s \tilde{\mT} \e^{ \int^s_0 \LL_{\rm C}(s_1) \ud s_1} \mlr{H_{xy}(s) - H_{xy}'(s), \e^{\int^t_s \LL_{\rm C}'(s_2) \ud s_2} P } }_{\rm F} \nonumber\\
    &\le \sum_{-L/2\le x\le 0, 1\le y\le n/2} \int^t_0 \ud s \norm{H_{xy}(s) - H_{xy}'(s)}_{\rm F} \nonumber\\
    &\le \sum_{-L/2\le x\le 0, 1\le y\le n/2} \norm{H_{xy}}_{\rm F} \int^t_0 \ud s c_{\rm F}s^{\beta} L^{-\gamma} \nonumber\\
    &\le c_{\rm F}' t^{\beta+1}L^{-\gamma}.
\end{align}
Here we have bounded the sum over $x,y$ by a constant: \begin{equation}
    \sum_{-L/2\le x\le 0, 1\le y\le n/2} \norm{H_{xy}}_{\rm F} \le \sum_{1\le y\le n/2} \mlr{y^{-\alpha}+ (y+1)^{-\alpha}+\cdots} \lesssim \sum_{1\le y\le n/2}y^{1-\alpha}<\infty,
\end{equation}
for $\alpha>2$, so the single nearest-Leighbor term $H_{01}$ dominates.

However, in order for \eqref{eq:mUUop-U0<} to be small for $t\sim n$, one needs $\gamma\ge \beta+1$. The best tail bound $\beta=\gamma=1$ from quantum walk \cite{Chen_1d21} seems not sufficient.

Consider direct interactions between the two regions around $n$ and $3L$, the problem simplifies to the following: Given two sets $A,B$ of $n$ spins, with initial state $\mR=\mR_A\otimes \mR_B$ where $\mR_A=\mR_B$ is $\mI/4$ on $n/2$ spins and $\keto{I}$ on the rest $n/2$ spins. Suppose any pair $(a,b)$ of spins where $a\in A, b\in B$ interact with strength $\sim L^{-\alpha}$, how long does it take to move one $\mI/4$ from $A$ to $B$?
}
\end{proof}

\subsection{Analogy to SPT physics}
Lastly, let us comment on an interesting analogy that follows from the previous subsection.   Returning to our general operator growth formalism, observe that 
 $\LL$ cannot be a general super-Hamiltonian.  Not only is it a sum over local super-Hamiltonians that act non-trivially only on certain subsets: $\mathcal{L} = \sum_S \mathcal{L}_S$ where \begin{equation}
     \mathcal{L}_S |A_S \otimes B_{S^c}) = \mathrm{i}|[H_S,A_S] \otimes B_{S^c}),
 \end{equation}but moreover for any $\LL_S$ acting on set $S$,
 \begin{equation}\label{eq:LI=0}
    \LL_S \keto{I_S \otimes A_{S^c}} = 0.
\end{equation}
$\mathcal{L}$ is moreover real for Hermitian elements $O,O'$: \begin{equation}
    \brao{O}\LL \keto{O'}^* = 2^{-N} \mlr{\tr\lr{O \ii [H, O']} }^* = 2^{-N} \tr\lr{\ii[H, O' ] O} = 2^{-N} \tr\lr{O \ii [H, O']} = \brao{O}\LL \keto{O'}.
\end{equation}
In the crudest sense, these requirements can be understood as the imposition of symmetries on the dynamics in the operator Hilbert space. 
Indeed, our proof of Theorem \ref{thm:super2} closely resembles the proof that symmetry-preserving Hamiltonians cannot prepare states belonging to non-trivial symmetry-protected topological in finite time \cite{SPT_lineardepth15}. Therein, one considers the evolution of string order parameters under a time-evolution that potentially generates the SPT phase. Exactly as in our proof of Theorem \ref{thm:super2}, only the endpoints of the string order parameter evolve under symmetry-preserving dynamics, which can be used to show that the evolution cannot be done in finite time. 

{From these observations, it is compelling to draw a connection between the hardness of implementing $U_{\mathrm{sh}}$ and the hardness of generating SPT phases.} However, despite the above similarities, the analogy to SPT physics is not completely satisfying. For example, the operator $\mathcal{R}_i$, which plays the same role in our proof that the string order parameter does in the SPT proof, is not obviously related to any unitary symmetry. Furthermore, in the SPT case, the time-evolution should become easy to implement when the symmetries are not preserved, but this is not straightforwardly true in our proof. The rest of this section is meant to give an alternative proof of the hardness of $U_{\mathrm{sh}}$ which more closely mirrors the proof that SPT states are hard to create, {in order to strengthen the conceptual link between $U_{\mathrm{sh}}$ and SPT phases.}

To start, we consider two copies of the 1D many-body Hilbert space, labelling sites on the first (second) copy as $[x]_A$ ($[x]_B$) with $x\in [1,4L]$. The operator which we will want to implement is $U_{\mathrm{sh}}\otimes U_{\mathrm{sh}}^{-1}$, \textit{i.e.} we shift one copy to the left and the second copy to the right. The symmetries we impose are the following. First, we impose a kind of inversion symmetry $R$ which interchanges the two copies and then implements bond-centered inversion of the periodic 1D system, such that 
\begin{equation} \label{eq:inversion}
    R:\begin{cases} [x]_A\mapsto [4L+1-x]_B\\
    [x]_B\mapsto [4L+1-x]_A.
    \end{cases}
\end{equation}
Clearly $U_{\mathrm{sh}}\otimes U_{\mathrm{sh}}^{-1}$ respects this symmetry. The second symmetry is less conventional: we only allow ``seperable'' interactions which do not couple the two chains. This is not a traditional symmetry in the sense that it does not correspond to dynamics that commute with some unitary operator. Nevertheless, it is a well-defined restriction we can place on the dynamics. This unconventional symmetry is the one place where the analogy to SPT physics is not as direct.

It is straightforward to see that implementing $U_{\mathrm{sh}}\otimes U_{\mathrm{sh}}^{-1}$ subject to these symmetries is equally as difficult as implementing $U_{\mathrm{sh}}$ without symmetry constraints since one protocol can be easily used to construct the other. However, the former has a clearer interpretation in terms of SPT physics. First, we observe that $U_{\mathrm{sh}}\otimes U_{\mathrm{sh}}^{-1}$ is easy to implement in the absence of symmetries using a depth-2 circuit of SWAP operations \cite{Farrelly2020}. These SWAPs exchange sites between the two copies, and therefore violate the separability constraint of the dynamics (they do, however, respect the inversion symmetry). {This is in alignment with the fact that SPT phases are easy to create when symmetry is not enforced.}

Now, we can reformulate the proof of Theorem \ref{thm:super2} in a way that is more closely connected to symmetry and string order parameters. The operator whose evolution we consider, generalizing Eq.~\ref{eq:Ri}, is
\begin{equation}
    \mathcal{S}_i = \bigotimes_{x=1-L}^L \frac{I_{[x]_A}}{2}\frac{I_{[x]_B}}{2}\otimes\bigotimes_{x=L+1}^{3L} \mathrm{SWAP}_{[x]_A,[4L+1-x]_B},
\end{equation}
where $\mathrm{SWAP}_{i,j}$ exchanges the two Hilbert spaces at sites $i$ and $j$. This has a clear interpretation in terms of symmetry, since $\mathcal{S}_i$ corresponds to applying the reflection symmetry $R$ to the region $x\in [L+1,3L]$, similar to a string order parameter \cite{Pollmann2012}. { In fact, $\mathcal{S}_i$ is exactly the order parameter that is used to detect SPT phases with inversion symmetry \cite{Pollmann2012}.}

Evolving this under $U_{\mathrm{sh}}\otimes U_{\mathrm{sh}}^{-1}$ gives
\begin{equation}
    \mathcal{S}_f = \bigotimes_{x=1-L}^L \frac{I_{[x-1]_A}}{2}\frac{I_{[x+1]_B}}{2}\otimes\bigotimes_{x=L+1}^{3L} \mathrm{SWAP}_{[x-1]_A,[4L+1-(x+1)]_B}.
\end{equation}
As required, only the boundaries between the identity region and inversion region in $\mathcal{S}_i$ evolve under inversion-symmetric and separable dynamics. Using this, one can straightforwardly prove the same results as in Theorem \ref{thm:super2} using $\mathcal{S}_i$ and $\mathcal{S}_f$ in place of $\mathcal{R}_i$ and $\mathcal{R}_f$ and considering symmetry-preserving dynamics. {Therefore, if not for the unconventional separability condition we enforced, this alternative proof would exactly resemble that of the hardness of generating SPT phases.}

As a final remark on the connection between the shift operation and SPT phases, we note that the shift is in fact an SPT entangler, meaning that acting with shift on a state belonging to a trivial phase can result in a state belonging to a non-trivial SPT phase \cite{SPT_lineardepth15, Chen2023}. Using known results on the hardness of preparing SPT states, this implies that the shift is hard to implement when certain symmetries are enforced. However, as we have demonstrated, the shift is hard to implement even without enforcing symmetries, and this does not follow immediately from its status as an SPT entangler. In a sense, the discussion in this section is meant to interpret this unconditional hardness of implementing the shift in terms of SPT physics. Going forward, we believe that there are more connections to be made between quantum cellular automata and SPT phases.




\section{Outlook}
In this paper, we have proved that the shift unitary $U_{\mathrm{sh}}$ cannot be implemented in a system-size-independent time in one dimension, with power-law interactions that decay faster than $1/r$. This result is not surprising, based on the earlier work \cite{QCA_LRB22} that proved this result for \emph{infinite} one-dimensional lattices.  Our work fills in a missing piece of the story by showing that their result holds, as expected, for finite lattices.   Due to finiteness, our proof relied on somewhat different techniques.

Our work also provides another practical application of the Frobenius light cone \cite{hierarchy20,Chen_1d21} in constraining the time necessary to implement a particular quantum unitary on all states (as opposed to just one state).  It would be interesting to understand if a similar approach applies to e.g. more general tasks of quantum routing \cite{routing_graph23,Friedman:2022vqb}, and if the Frobenius light cone is \emph{still too weak} of a bound on the implementation time of certain unitaries.  Such a result may require a qualitatively new approach to proving the hardness of implementing particular unitaries, which we leave to future work.  It would also be interesting to construct explicit protocols for implementing shift using power-law Hamiltonians, as has been done for state-transfer.

\section*{Acknowledgements}
This work was supported by the Alfred P. Sloan Foundation under Grant FG-2020-13795 (AL), the Department of Energy under Quantum Pathfinder Grant DE-SC0024324 (CY, AL), and the Simons Collaboration
on Ultra-Quantum Matter, Award No. 651440 (DTS).

\bibliography{biblio}

\end{document}